\documentclass[peerreview, draftclsnofoot, onecolumn, 12pt]{IEEEtran}
\hyphenation{op-tical net-works semi-conduc-tor IEEEtran}
\ifx\pdfoutput\undefined
\bibliographystyle{IEEEbib}
\bibliography{mybib}
\usepackage{graphicx}
\else
\usepackage[pdftex]{graphicx}
\fi
\usepackage{url}
\usepackage{array}
\usepackage{multicol}
\usepackage{stfloats}
\usepackage{amssymb}
\usepackage{amsmath}
\usepackage{amsthm}
\usepackage{cite}
\usepackage{enumerate}
\usepackage{epstopdf}
\usepackage{color}

\setlength{\textheight}{9.2in}
\setlength{\headheight}{0.5in}
\DeclareGraphicsExtensions{.pdf,.png,.jpg}
\begin{document}
 \title{Symbol-Level Multiuser MISO Precoding for Multi-level Adaptive Modulation}
 \author{
 \IEEEauthorblockN{ Maha~Alodeh,~\IEEEmembership{Member, IEEE}, Symeon Chatzinotas, \IEEEmembership{Senior~Member,~IEEE,}
 Bj\"{o}rn Ottersten, \IEEEmembership{Fellow Member,~IEEE}\thanks{Maha Alodeh, Symeon Chatzinotas and  Bj\"{o}rn Ottersten are
with Interdisciplinary Centre for Security Reliability and Trust (SnT) at the University
of Luxembourg, Luxembourg. E-mails:\{ maha.alodeh@uni.lu, symeon.chatzinotas
@uni.lu, and bjorn.ottersten@uni.lu\}. \newline
This work is supported by Fond National de la Recherche Luxembourg (FNR)
projects, Smart Resource Allocation for Satellite Cognitive Radio (SRAT-SCR), Spectrum Management and Interference Mitigation in Cognitive Radio Satellite Networks (SeMiGod), and SATellite SEnsor NeTworks for spectrum monitoring (SATSENT). Part of this work is published in Globecom 2015 \cite{globecom}. This work is protected under the filed patent, System and Method for Symbol-level Precoding in Multiuser Broadcast Channels EP No. 15186548.2. }}\\
  
}
 
 \maketitle
 \begin{abstract}
\boldmath Symbol-level precoding is a new paradigm for multiuser multiple-antenna downlink systems which aims at creating constructive interference among the transmitted data streams. This can be enabled by designing the precoded signal of the multiantenna transmitter on a symbol level, taking into account both channel state information and data symbols. Previous literature has studied this paradigm for Mary phase shift keying (MPSK) modulations by addressing various performance metrics, such as power minimization and maximization of the minimum rate.  In this paper, we extend this to generic multi-level modulations i.e. Mary quadrature amplitude modulation (MQAM) by establishing connection to PHY layer multicasting with phase constraints. Furthermore, we address adaptive modulation schemes which are crucial in enabling the throughput scaling of symbol-level precoded systems. In this direction, we design signal processing algorithms for minimizing the required power under per-user signal to interference noise ratio (SINR) or goodput constraints. Extensive numerical results show that the proposed algorithm provides considerable power and energy efficiency gains, while adapting the employed modulation scheme to match the requested data rate.   

\begin{IEEEkeywords}
Symbol-level precoding, Constructive interference, Multiuser MISO Channel, MQAM, Multi-level modulation.
\end{IEEEkeywords}
\end{abstract}

\section{Introduction}
In a generic framework, precoding can be loosely defined as the design of the transmitted signal to efficiently deliver the desired information to multiple users exploiting the multiantenna space. Focusing on multiuser downlink systems, the precoding techniques can be classified as:
\begin{enumerate}
\item \textit{Group-level precoding} in which multiple codewords are transmitted simultaneously
but each codeword (i.e. a sequence of symbols) is addressed to a group of users. This case is also known as multigroup
multicast precoding \cite{g-multicast}-\cite{silva} and the precoder design
is dependent on the channels in each user group.
\item \textit{User-level precoding} in which multiple codewords are transmitted simultaneously
but each codeword (i.e. a sequence of symbols) is addressed to a single user. This case is also known as multiantenna
broadcast channel precoding \cite{mats}-\cite{ghaffar} and the precoder design
is dependent on the channels of the individual users. 
\item \textit{Symbol-level precoding} in which multiple symbols are transmitted simultaneously
and each symbol is addressed to a single user 
\cite{Christos-1}-\cite{TWC_CI}. This is also known as a constructive interference
precoding and the precoder design is dependent on  both the channels and
the symbols of the users.
\end{enumerate} 

It has been shown in various literature that symbol-level precoding shows considerable gains in comparison to the conventional group- or user-level precoding schemes \cite{Christos-1}-\cite{Spano}. The main reason is that in symbol-level precoding the vector of the aggregate multiuser interference can be manipulated, so that it contributes in a constructive manner from the perspective of each individual user. This approach cannot be exploited in conventional precoding schemes, since  each codeword includes a sequence of symbols and the phase component of each symbol rotates the interference vector in a different direction. As a result, conventional schemes focus on controlling solely the power of the aggregate multiuser interference, neglecting the vector phase in the signal domain. 
However, it should be highlighted here that the anticipated symbol-level
gains come at the expense of additional complexity at the system design level. More specifically, the precoded signal has to be recalculated on a symbol- instead of a codeword-basis. Therefore, faster precoder calculation and switching is requisite for symbol-level precoding, which can be translated to more complex algorithms at the transmitter side.

Before highlighting the contributions of this paper, the following paragraphs present a detailed overview of related work. The paradigm of symbol-level precoding was firstly proposed in the context of directional modulation \cite{directional_modulational_1}-\cite{directional_modulational_2}. The idea of exploiting this paradigm for multiuser multiple input single output (MISO) downlink to exploit the interference was proposed in \cite{Christos-1}, but it was strictly limited to PSK modulations. The main concept relies on the fact that the multiuser interference can be pre-designed at the transmitter, so that it steers the PSK symbol deeper into the correct detection region.  Based on a minimum mean square error (MMSE) objective, two techniques were proposed based on partial zero-forcing \cite{Christos-1} and correlation rotation \cite{Christos}. These techniques were based on decorrelating the user channels before designing the constructive interference. However, this step leads to suboptimal performance, as channel correlation can be beneficial while aiming for constructive interference. Based on this observation, a maximum ratio transmission based solution was proposed in \cite{maha}-\cite{maha_TSP} to perform interference rotation without channel inversion, which outperformed previous techniques.  

All aforementioned techniques have a commonality, namely they were based on the conventional approach of applying a precoding matrix to the user symbol vector for designing the transmitted signal. Interestingly, authors in \cite{maha}\cite{maha_TSP} have shown that in symbol-level precoding more efficient solutions can be found while designing the transmitted signal directly. Following this intuition, a novel multicast-based symbol-level precoding technique was initially proposed in \cite{maha} and later elaborated in \cite{maha_TSP} for MPSK modulations. In more detail, the transmitted signal can be designed directly by solving an equivalent PHY-layer multicasting problem with additional phase constraints on the received user signal. Subsequently, the calculated complex coefficients can be utilized to modulate directly the output of each antenna instead of multiplying the desired user symbol vector with a precoding matrix. Based on this novel approach, authors in \cite{Masouros_TSP_T} have extended the multicast-based symbol-level precoding for imperfect channel state information (CSI) by proposing a robust precoding scheme.

Going one step further, the above techniques were generalized in \cite{SPAWC}-\cite{TWC_CI} taking into account that the desired MPSK symbol does not have to be constrained by a strict phase constraint for the received signal, as long as it remains in the correct detection region. The flexible phase constraints can obviously introduce a higher symbol error rate (SER) if not properly designed. In this direction, the work in \cite{TWC_CI} studies the optimal operating point in terms of flexible phase constraints that maximizes the system energy efficiency.  

In the context of the above related work, the main contributions of this paper are:
\begin{itemize}
\item The extension of symbol-level precoding from single-level to any generic multi-level modulations, such as MQAM.
\item The definition of a system architecture for a symbol-level precoding transmitter.

\item The extension of the connections between symbol-level precoding and phase-constrained PHY multicasting for generic multi-level modulations.

\item The derivation of the probability density function (PDF) for the equivalent channel power and amplitude.
\item The derivation of a symbol-level precoding algorithm for the power minimization with SINR or goodput constraints under an adaptive modulation scheme.

\end{itemize}

The remainder of this paper is organized as follows: the system model is described in section (\ref{system}). A multicast characterization of symbol-level precoding is explained in section (\ref{multicast}). In section \ref{sec: Symbol-level Precoding with Multi-level Modulation}, we propose symbol-level precoding for any generic modulation. In section (\ref{symbol level}), we propose a goodput-based optimization algorithm. Finally, the numerical results are displayed in section (\ref{Numerical Results}). 

Notation: We use boldface upper and lower case letters for
 matrices and column vectors, respectively. $(\cdot)^H$, $(\cdot)^*$
 stand for Hermitian transpose and conjugate of $(\cdot)$. $\mathbb{E}(\cdot)$ and $\|\cdot\|$ denote the statistical expectation and the Euclidean norm, and $\mathbf{A}\succeq \mathbf{0}$ is used to indicate the positive
semi-definite matrix. $\angle(\cdot)$, $|\cdot|$ are the angle and magnitude  of $(\cdot)$ respectively. Finally, $\mathcal{I}(\cdot)$, $\mathcal{Q}(\cdot)$ denote the in phase and the quadrature components of $(\cdot)$. 

\section{System and Signal Models}
\label{system}
Let us consider a single-cell multiple-antenna downlink scenario,
where a single base station (BS) \footnote{The described system can be straightforwardly extended for a multicell system where the signal design takes place in a centralized manner, e.g. Coordinated MultiPoint (CoMP), Cloud Radio Access Network (RAN) etc.} is equipped with $N_t$
transmit antennas that serves $K$ user terminals simultaneously,
each one of them is equipped with a single receive antenna.
As depicted in Fig. \ref{Tprecoding}, the transmission scheme is based on $K$ frames (one per user) which include a common preamble for the pilot symbols and signaling information, followed by $N$ useful symbols for each user (data payload). It should be noted that the preamble is not precoded, while the useful symbols are precoded on a symbol-level.

\begin{figure*}[hh]
	\begin{tabular}[ht]{c}
		\begin{minipage}{17 cm}
			\begin{center}
				\includegraphics[scale=0.27]{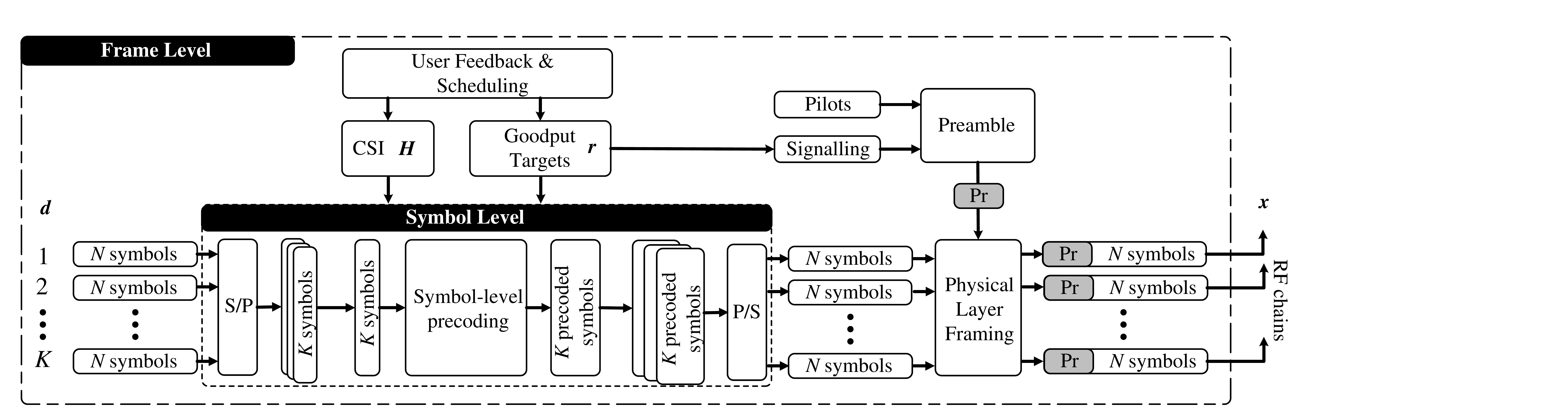}
				\caption{\label{Tprecoding} Transmitter block diagram for symbol-level precoding. The block operations are classified into frame-level and symbol-level. }
			\end{center}
		\end{minipage}\\
	\end{tabular}
\end{figure*}

Similar to conventional multiuser precoding schemes, the pilots are exploited by each user in order to estimate its channel through standard CSI estimation methods and feed it back to the BS, so that it can be used in the design of the precoded signal. In this context, we assume a quasi static block fading channel $\mathbf{h}_j\in\mathbb{C}^{1\times N_t}$ between the BS antennas and the $j^{th}$ user\footnote{The proposed algorithms can be applied to Very High Speed Digital Subscriber Line (VDSL) \cite{haardt_2} and satellite communications \cite{sat_mag}, where the channel remains constant for a long period}. This is assumed to be known at the BS based on the CSI feedback and fixed for each frame, i.e. $N$ symbols. 
\begin{newtheorem}{remark}{Remark}
\end{newtheorem}
\begin{remark}
	Channel information estimation in conventional precoding comprises of two steps: CSI estimation step to design the precoding matrix and SINR estimaton step to select the appropriate modulation and its corresponding detection region at the receivers \cite{caire_CSI}.  However, it can be conjectured that the SINR estimation step cannot be performed easily in the systems that adopt symbol-level precoding. \textcolor{black}{ In SINR estimation step, a precoded sequence is transmitted to estimate the SINR at each receiver. In the user-level precoding (conventional linear beamforming), this sequence is designed based on the acquired CSI in the first step. However in symbol-level precoding, the output of the precoded pilot depends both on the symbols and channel. The difficulty of SINR stems from the fact the precoded pilot should be designed taking into consideration different vector combinations to provide a reliable averaging process for the SNR estimation. It should be noted that the number of symbol vector combinations increases with the constellation size}. In this section, we propose a simple modulation allocation based on the user's goodput demands.
\end{remark}
Regarding the useful symbols, the BS can serve each user with a different modulation to support different user rates. This is enabled through an adaptive modulation scheme. In more detail, the modulation for each user is selected from the set $\mathcal{M}=\{1,\ldots,M\}$ based on the user's requested rate and the minimum and maximum SINR  thresholds. The supported SINR range is  $\zeta\in{[\zeta_0,\zeta_{\max}]}$ and thus, signal to interference noise ratio (SINR) lower than $\zeta_0$ leads to unavailability (i.e. zero goodput), while SINR larger than $\zeta_{\max}$ do not provide a further goodput increase.  

It should be noted that although the precoding changes on a symbol-basis, the modulation types are allocated to users on a frame-basis. This is necessary because the user expects to receive the same modulation type for all useful symbols in a frame in order to properly adjust the detection regions. The users are notified about their corresponding modulations through the signaling preamble of the frame\footnote{Changing the modulation on a symbol-basis is unfeasible, as the user would have to be notified about the used modulation on a symbol-basis and this would lead to unacceptable overhead.}.

For a single symbol period $n=1\ldots N$, the received signal at
$j^{th}$ user can be
written as
\begin{eqnarray}
y_j[n]&=&\mathbf{h}_j\mathbf{x}[n]+z_j[n].
\end{eqnarray} $\mathbf{x}[n]\in\mathbb{C}^{N_t\times 1}$ is the transmitted symbol sampled signal vector at the $n$ th symbol period from the multiple antennas
transmitter and  $z_j$ denotes the noise at $j$th receiver, which is assumed as an i.i.d  complex Gaussian distributed variable $\mathcal{CN}(0,\sigma^2_z)$. A compact formulation
of the received signal at all users' receivers can be written as
\begin{eqnarray}
\mathbf{y}[n]&=&\mathbf{H}\mathbf{x}[n]+\mathbf{z}[n].
\end{eqnarray}
Assuming linear precoding, let $\mathbf{x}[n]$ be written as $\mathbf{x}[n]=\sum^K_{j=1}\mathbf{w}_j[n]d_j[n]$,
where $\mathbf{w}_j$ is the $\mathbb{C}^{N_t\times
1}$ precoding vector for user $j$. The received signal at $j^{th}$
user ${y}_j$ in $n^{th}$ symbol period is given by
\begin{eqnarray}
\label{rx_o}
{y}_j[n]=\mathbf{h}_j\mathbf{w}_j[n] d_j[n]+\displaystyle\sum_{k\neq j}\mathbf{h}_j\mathbf{w}_k[n]
d_k[n]+z_j[n].
\end{eqnarray}
A more detailed compact system formulation
is obtained by stacking the received signals and the noise
components for the set of $K$ selected users as
\begin{eqnarray}
\label{eq:vector model}
\mathbf{y}[n]=\mathbf{H}\mathbf{W}[n]\mathbf{d}[n]+\mathbf{z}[n]
\end{eqnarray}
with $\mathbf{H} = [\mathbf{h}^T_1,\hdots, \mathbf{h}^T_K]^T \in\mathbb{C}^{K\times N_t} $, $\mathbf{W}=[\mathbf{w}_1, \hdots,\mathbf{w}_K]\in\mathbb{C}^{N_t\times K}$ as the
compact channel and precoding matrices. Notice that the transmitted symbol vector $\mathbf{d}\in\mathbb{C}^{K\times 1}$
includes the uncorrelated data symbols $d_k$ for all users with $\mathbb{E}_n[|d_k|^2] = 1$.
From now on, we drop the symbol period index for the sake of notation.
\vspace{-0.2cm}
\textcolor{black}{
\subsection{Power constraints for user-level and symbol-level precodings}
In the conventional user-level precoding (linear beamforming), the transmitter needs to precode every $\tau_{c}$
which means that the power constraint has to be satisfied along the coherence time
$\mathbb{E}_{\tau_c}\{\|\mathbf{x}\|^2\}\leq
P$. Taking the expectation of $\mathbb{E}_{\tau_c}\{\|\mathbf{x}\|^2\}=\mathbb{E}_{\tau_c}\{tr(\mathbf{W}\mathbf{d}\mathbf{d}^H\mathbf{W}^H)\}$,
and since $\mathbf{W}$ is fixed along $\tau_c$, the previous expression can
be reformulated as $tr(\mathbf{W}\mathbb{E}_{\tau_c}\{\mathbf{d}\mathbf{d}^H\}\mathbf{W}^H)=tr(\mathbf{W}\mathbf{W}^H)=\sum^K_{j=1}\|\mathbf{w}_j\|^2$,
where $\mathbb{E}_{\tau_c}\{\mathbf{d}\mathbf{d}^H\}=\mathbf{I}$ due to uncorrelated
symbols over $\tau_c$. However, in symbol level precoding the power constraint should be guaranteed
for each symbol vector transmission namely for each $\tau_s$. In this case
the power constraint equals to $\|\mathbf{x}\|^2=\mathbf{W}\mathbf{d}\mathbf{d}^H\mathbf{W}^H=\|\sum^K_{j=1}\mathbf{w}_jd_j\|^2$.}
\vspace{-0.2cm}
\textcolor{black}{
\section{Constructive Interference Definition}
\label{Definition}
Interference can deviate the desired signal in any random direction. The power of the interference can be used as an additional source of power to be utilized in wireless systems. In conventional user-level precoding, multiuser interference treated as harmful factor that should be mitigated, without paying attention to the fact the interference in some scenario can push the received signal deeper in the detection region. As consequence, an additional parameter that can be optimized. In the literature, the multiuser interference has been
be classified into constructive or destructive based on whether it facilitates or deteriorates the correct detection of the received symbol. For MPSK scenarios, a detailed classification of interference is discussed thoroughly in \cite{Christos-1}, \cite{maha_TSP}.  In this situation, the interference is tackled at each set of users'symbol which manages to find the optimal precoding strategy that can utilize the interference in a constructive fashion rather than just mitigating it. Therefore, the symbol-level precoding tailors the multiuser MISO transmission strategy to suit the adopted modulation by exploiting its detection regions.}

\textcolor{black}{Furthermore, it is worth mentioning that symbol-level precoding is different from the interference alignment techniques \cite{IA1}-\cite{IA2}. It should be noted that symbol-level precoding does not attempt to project interference in a certain subspace of the degrees of freedom so that it can be removed easily. On the contrary, it uses all the degrees of the freedom for all users by operating on a symbol-level. This allows to mitigate interference in the signal domain rather than in the power domain, as done in conventional user-level precoding. }

\textcolor{black}{In multi-level modulations, each constellation can consist of inner, outer, and outermost constellation points. The interference can be utilized to push the received signal deeper in the detection region for outer and outermost constellation points. However, for inner constellation points,  the interference can have limited constructive contribution to the target signals. In the remainder of paper, a detailed symbol-level precoding technique that exploits the interference in multiuser MISO for any multi-level modulation is proposed.    } 
\section{The relation between Symbol-level Precoding and Physical-layer Multicasting}
\label{multicast}

\subsection{PHY-layer Multicasting Preliminaries}
The PHY-layer multicasting aims at sending a single message to multiple users simultaneously through multiple transmit antennas \cite{multicast}-\cite{jorswieck}. In this context, the power min problem for PHY-layer multicasting can be written as: 

 \begin{eqnarray}\nonumber
\label{eq:Multicast}
\mathbf{x}(\mathbf{H},\boldsymbol\zeta)&=&\arg{\underset{\mathbf{x}}{\min}} \|\mathbf{{x}}\|^2\\
&s.t&\|\mathbf{{h}}_j\mathbf{{x}}\|^2\geq\zeta_j\sigma^2_z, \forall j\in K
\end{eqnarray}
where $\zeta_j$ is the SINR target for the $j^{th}$ user that should
 be granted by the BS, and ${\boldsymbol\zeta}=[\zeta_1,\hdots,\zeta_K]$ is the vector that contains all the SINR targets. This problem has been efficiently solved using semidefinite relaxation \cite{boyd} in \cite{multicast}.
\begin{figure}[ht]
 \vspace{-1.4cm}
\hspace{1cm} \includegraphics[width=1.2\linewidth]{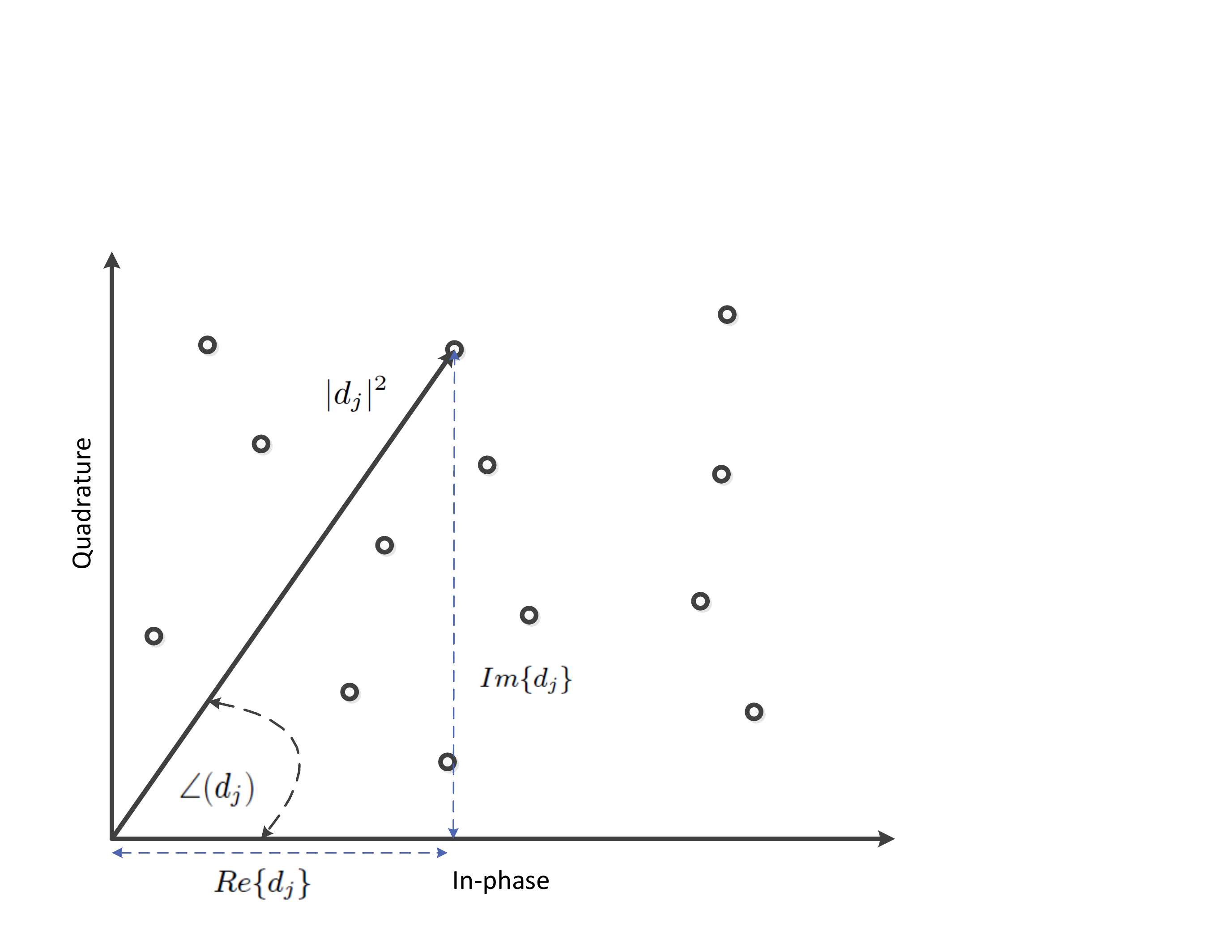}
 \vspace{-0.5cm}
 \caption{\label{generic modulation} The first quadrant of a generic modulation constellation.}
 \end{figure}
\subsection{Symbol-level Precoding Through Multicasting}
Let us define a generic constellation represented by the symbol set  $\mathcal{D}$, where $d_j\in \mathcal{D}$ represent symbols (see Fig. (\ref{generic modulation})). Each symbol can have two equivalent representations:
\begin{enumerate}
\item Magnitude $|d_j|^2$ and phase $\angle(d_j)$ 
\item In-phase $Re\{d_j\}$ and quadrature $Im\{d_j\}$ components. 
\end{enumerate}
Let us also denote the received signal at the antenna of the $j$th user (ignoring the receiver noise)  as $\mathbf{s}_j=\mathbf{h}_j\sum^K_{k=1}\mathbf{w}_kd_k$.  In this context, a generic formulation for power minimization in a single symbol period under symbol-level precoding and SINR constraints\footnote{The complete algorithm including goodput constraints is elaborated in section \ref{Power Minimization with Goodput Constraints}.} can be written using the I-Q representation:
\begin{eqnarray}\nonumber
\label{CI_GMOD_IQ}
\hspace{-1cm}&\hspace{-0.5cm}\mathbf{w}_k(\mathbf d,\mathbf{H},\boldsymbol\zeta)&=\arg\underset{\mathbf{w}_k}{\min} \|\sum^K_{k=1}\mathbf{w}_kd_k\|^2\\\nonumber
&s.t&\mathcal{C}_1: \mathcal{I}\{\mathbf{h}_j\sum^K_{k=1}\mathbf{w}_kd_k\}\unlhd \sqrt{\zeta_j}\sigma_z Re\{d_j\}, \forall j\in K \\
&\quad&\mathcal{C}_2: \mathcal{Q}\{\mathbf{h}_j\sum^K_{k=1}\mathbf{w}_kd_k\}\unlhd \sqrt{\zeta_j}\sigma_z Im\{d_j\}, \forall j\in K, 
\end{eqnarray}
where $\kappa_j=|d_j|/\sqrt{\mathbb{E}_\mathcal{D}[|d_j|^2]}$  denotes short-term factor changes on a symbol-basis and adjusts the long-term SINR based on the amplitude of the desired symbol and $\unlhd$ denotes the correct detection region. 
 The desired amplitude for each user depends on two factors: a long and a short-term one. The long-term factor refers to the target SINR $\zeta$ which determines the SER and remains constant across all the symbol vectors of a frame. Assuming that the entire symbol set $\mathcal{D}$ has unit average power i.e. $\mathbb{E}_\mathcal{D}[|d_j|^2]=1$. 
Using the magnitude-phase representation, an equivalent way of formulating the problem can be expressed as:
\begin{eqnarray}\nonumber
\label{CI_GMOD}
\mathbf{w}_k(\mathbf d,\mathbf{H},\boldsymbol\zeta)&=&\arg{\min} \|\sum^K_{k=1}\mathbf{w}_kd_k\|^2\\\nonumber
&s.t&\mathcal{C}_1: \|\mathbf{h}_j\sum^K_{k=1}\mathbf{w}_kd_k\|^2\unlhd  \kappa^2_j\zeta_j\sigma^2_z, \forall j\in K \\
&\quad&\mathcal{C}_2: \angle (\mathbf{h}_j\sum^K_{k=1}\mathbf{w}_kd_k)=\angle (d_j), \forall j\in K
\end{eqnarray}
The set of constraints $\mathcal{C}_1$, $\mathcal{C}_2$ 
 guarantees that each user receives its corresponding data symbol $d_j$ with a correct amplitude and phase\footnote{$\mathcal{C}_1$ and $\mathcal{C}_2$ depend on the type of modulation and the constellation point as elaborated in section \ref{sec: Symbol-level Precoding with Multi-level Modulation}.}.  

\begin{newtheorem}{theorem}{\textbf{Theorem}}
\label{theorem 1}
\end{newtheorem}
\begin{theorem}
In symbol-level precoding, the power minimization problem under SINR constraints \eqref{CI_GMOD} is equivalent to a PHY-layer multicasting problem with an effective channel $\mathbf{\hat{H}}$ and phase constraints \eqref{CMU_GMOD_x}. 
\end{theorem}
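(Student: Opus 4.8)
The plan is to exploit the fact that in \eqref{CI_GMOD} both the cost and every constraint depend on the $K$ precoding vectors $\{\mathbf{w}_k\}$ only through the single transmitted vector $\mathbf{x}=\sum_{k=1}^{K}\mathbf{w}_k d_k$. Indeed, the noiseless received signal of user $j$ is $\mathbf{h}_j\mathbf{x}$, so $\mathcal{C}_1$ and $\mathcal{C}_2$ constrain $\mathbf{h}_j\mathbf{x}$ alone, while the objective $\|\sum_k \mathbf{w}_k d_k\|^2$ is exactly $\|\mathbf{x}\|^2$. First I would promote $\mathbf{x}$ to the optimization variable and observe that the cost already coincides with the multicast objective of \eqref{eq:Multicast}.

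Next I would show that this reparameterization is lossless. For a fixed data vector $\mathbf{d}$ with at least one nonzero entry, the linear map $(\mathbf{w}_1,\dots,\mathbf{w}_K)\mapsto\sum_k \mathbf{w}_k d_k$ is surjective onto $\mathbb{C}^{N_t}$: given any target $\mathbf{x}$, picking an index $k_0$ with $d_{k_0}\neq 0$, setting $\mathbf{w}_{k_0}=\mathbf{x}/d_{k_0}$ and $\mathbf{w}_k=\mathbf{0}$ for $k\neq k_0$ reproduces $\mathbf{x}$. Consequently, minimizing over $\{\mathbf{w}_k\}$ subject to constraints that see only $\mathbf{x}$ is equivalent to minimizing directly over $\mathbf{x}\in\mathbb{C}^{N_t}$. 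This collapses the $K$-vector design into a single-signal design, which is precisely the structure of the power-minimization multicasting problem \eqref{eq:Multicast}.

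It then remains to recast the two constraint sets in multicast-plus-phase form by rotating the channels. I would define the effective channel $\hat{\mathbf{h}}_j=e^{-\mathrm{i}\angle(d_j)}\mathbf{h}_j$ and stack the rows into $\hat{\mathbf{H}}$. Because the rotation is unimodular, $|\hat{\mathbf{h}}_j\mathbf{x}|=|\mathbf{h}_j\mathbf{x}|$, so the magnitude constraint $\mathcal{C}_1$ is left invariant and reads $|\hat{\mathbf{h}}_j\mathbf{x}|^2\unlhd \kappa_j^2\zeta_j\sigma_z^2$; this is the multicast SINR constraint of \eqref{eq:Multicast} with the effective target $\kappa_j^2\zeta_j$ carrying the short-term amplitude adjustment. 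The phase constraint $\mathcal{C}_2$, $\angle(\mathbf{h}_j\mathbf{x})=\angle(d_j)$, becomes $\angle(\hat{\mathbf{h}}_j\mathbf{x})=0$, i.e. $\hat{\mathbf{h}}_j\mathbf{x}$ is real and nonnegative. This last condition is exactly the extra phase constraint \eqref{CMU_GMOD_x} that distinguishes the present problem from ordinary multicasting, and it completes the identification.

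The step I expect to be the main obstacle is making the constraint translation rigorous across all constellation points, since $\unlhd$ denotes membership in the correct detection region rather than a fixed inequality. For inner points the region is a two-sided box in the rotated $\mathcal{I}$-$\mathcal{Q}$ plane, while for outer and outermost points it opens into half-planes or quadrants, so the detailed form of $\mathcal{C}_1,\mathcal{C}_2$ in the equivalent I-Q formulation \eqref{CI_GMOD_IQ} depends on the point type. The work is to verify that the rotation by $e^{-\mathrm{i}\angle(d_j)}$ maps each such region to its canonical description on the real/imaginary axes consistently, so that the rotated constraints match \eqref{CMU_GMOD_x} for every point type. Once this is checked, the two problems share the same objective and the same feasible set in $\mathbf{x}$, and the equivalence asserted by the theorem follows from the surjectivity argument above.
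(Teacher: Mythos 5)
Your proposal is correct and follows essentially the same route as the paper: absorb the symbol phases into an effective (rotated) channel, observe that the objective and constraints depend on the precoders only through $\mathbf{x}=\sum_k\mathbf{w}_kd_k$, and design $\mathbf{x}$ directly, reducing \eqref{CI_GMOD} to multicasting over $\mathbf{\hat{H}}$ with phase constraints as in \eqref{CMU_GMOD_x}. The only differences are cosmetic---the paper's diagonal matrix $\mathbf{A}$ also absorbs the amplitude factor $1/\kappa_j$ so that the SINR targets become the uniform $\zeta_j$, and it rotates to a generic reference phase $\angle(d)$ rather than zero---while your explicit surjectivity argument simply makes rigorous the reparameterization step that the paper states as an observation.
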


\begin{proof}
Before starting the proof, it should be noted that the variable amplitude of each target symbol has been already incorporated in the SINR constraints of $\mathcal C_1$. In other words, the multi-level amplitudes for each user have been expressed as weighting factors for the frame-level SINRs $\zeta$. 
Building on this, the proof is based on two steps: a) defining an effective channel, where each symbol phase is absorbed in the user's channel vector, b) observing that the transmitted signal vector $\mathbf{x}$ can be designed directly and not as a linear product of the precoding matrix with the symbol vector i.e. $\mathbf{Wd}$. 

By denoting the contribution of each user's precoded symbol to the transmit signal as  $\mathbf{x}_k=\mathbf{w}_kd_k$, and assuming a unit-norm symbol $d$ with a reference phase, let us define the effective channel $\mathbf{\hat{H}}=\mathbf{A}\mathbf{H}$, where $\mathbf{A}$ is a diagonal $K\times K$ matrix expressed as:

\begin{eqnarray}
[\mathbf{A}]_{j,j}=\frac{\exp(\angle(d-d_j)i)}{\kappa_j}.
\end{eqnarray}

Using the above notations, an equivalent optimization problem can be formulated below:
 \begin{eqnarray}\nonumber
\label{CMU_GMOD}
\mathbf{x}_k(\mathbf{\hat H},\boldsymbol\zeta)&=&\arg{{\min}}\quad \|\sum^K_{k=1}\mathbf{{x}}_k\|^2\\\nonumber
&s.t&\mathcal{C}_1: \|\mathbf{\hat{h}}_j\sum^K_{k=1}\mathbf{{x}}_k\|^2\geq \zeta_j\sigma^2_z, \forall j\in K \\
&\quad&\mathcal{C}_2: \angle (\mathbf{\hat{h}}_j\sum^K_{k=1}\mathbf{{x}}_k)=\angle (d), \forall j\in K.\end{eqnarray}
It should be noted that the original user symbols do not appear in the optimization problem anymore, as they have been incorporated in the weighted SINR constraints and the effective channel.
Based on this observation, we can design directly the transmit signal $\mathbf{x}$, by dropping its dependency on the individual user's symbols. Replacing $\mathbf{x}=\sum^K_{j=1}\mathbf{x}_j$ yields:
 \begin{eqnarray}\nonumber
\label{CMU_GMOD_x}
\mathbf{x}(\mathbf{\hat H},\boldsymbol\zeta)&=&\arg\underset{\mathbf{x}}{\min} \|\mathbf{{x}}\|^2\\\nonumber
&s.t&\mathcal{C}_1: \|\mathbf{\hat{h}}_j\mathbf{{x}}\|^2= \zeta_j\sigma^2_z, \forall j\in K \\
&\quad&\mathcal{C}_2: \angle (\mathbf{\hat{h}}_j\mathbf{{x}})=\angle (d), \forall j\in K.
\end{eqnarray}
which is equivalent to a PHY-layer multicasting problem \eqref{eq:Multicast} for the effective channel $\mathbf{\hat H}$ with additional phase constraints on the received user signals $\mathcal{C}_2$.
\end{proof}

\begin{remark} In the equivalent problem, the effect of the input symbols has been absorbed in the channel. As a result, the equivalent channel is no longer fixed and it combines the effects of the fixed channel and the current input symbols. Treating this ergodically, we can model it as a random fast fading channel which changes with the symbol index $n$. 
\end{remark}
In section \ref{Equivalent}, we derive the probability function of the equivalent channel power, magnitude, and phase.

\begin{newtheorem}{corollary}{\textbf{Corollary}}
\end{newtheorem}
\begin{corollary}
An equivalent formulation of the optimization problem (\ref{CI_GMOD}) can be expressed by rewriting the magnitude and phase constraints in the form of in-phase and quadrature constraints:
\begin{eqnarray}\nonumber
\label{IQ}
&\mathbf{x}(\mathbf{\hat H},\boldsymbol\zeta)=&\arg\underset{\mathbf{x}}{\min} \|\mathbf{x}\|^2\\\nonumber
&\mathcal{C}_1:&\mathcal{I}_j\unlhd\sqrt{\zeta_j}\sigma_z {Re}\{d\}, \forall j\in K\\
&\mathcal{C}_2:&\mathcal{Q}_j\unlhd\sqrt{\zeta_j}\sigma_z {Im}\{d\}, \forall j\in K, 
\end{eqnarray}
where $\mathcal{I}_j$, $\mathcal{Q}_j$ are in-phase and  out-of-phase components for the detected signal at j$^{th}$ terminal and  can be reformulated as:
\begin{eqnarray}\nonumber
\label{IQ}
&\mathcal{I}_j&=\frac{\mathbf{\hat h}_j\mathbf{x}+(\mathbf{\hat h}_j\mathbf{x})^\ast}{2} \\\nonumber
&\mathcal{Q}_j&=\frac{\mathbf{\hat h}_j\mathbf{x}-(\mathbf{\hat h}_j\mathbf{x})^\ast}{2i}.
\end{eqnarray}
\end{corollary}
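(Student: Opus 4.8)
The plan is to begin from the magnitude--phase formulation of the equivalent multicasting problem established in Theorem~\ref{theorem 1}, namely \eqref{CMU_GMOD_x}, and to show that the constraint pair $\mathcal{C}_1,\mathcal{C}_2$ acting on the effective received scalar $\mathbf{\hat{h}}_j\mathbf{x}$ can be replaced by an equivalent pair of constraints on its in-phase and quadrature parts. The underlying fact is elementary but decisive: a complex scalar is determined uniquely either by its magnitude and phase or by its real and imaginary parts, so any constraint written in one representation admits an exact translation into the other, while the objective $\|\mathbf{x}\|^2$ is left untouched.

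First I would fix a user index $j$ and write $r_j=\mathbf{\hat{h}}_j\mathbf{x}$ for the effective received scalar. The standard conjugate decomposition immediately yields its in-phase and quadrature components $\mathcal{I}_j=\tfrac{1}{2}(r_j+r_j^\ast)=\mathrm{Re}\{r_j\}$ and $\mathcal{Q}_j=\tfrac{1}{2i}(r_j-r_j^\ast)=\mathrm{Im}\{r_j\}$, reproducing the stated expressions for $\mathcal{I}_j$ and $\mathcal{Q}_j$. This step is purely definitional and needs no hypotheses.

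Next I would translate the constraint values. The magnitude constraint $\mathcal{C}_1$ of \eqref{CMU_GMOD_x} pins $|r_j|=\sqrt{\zeta_j}\,\sigma_z$ and the phase constraint $\mathcal{C}_2$ pins $\angle(r_j)=\angle(d)$; inserting these into $\mathcal{I}_j=|r_j|\cos\angle(r_j)$ and $\mathcal{Q}_j=|r_j|\sin\angle(r_j)$ produces $\mathcal{I}_j=\sqrt{\zeta_j}\,\sigma_z\cos\angle(d)$ and $\mathcal{Q}_j=\sqrt{\zeta_j}\,\sigma_z\sin\angle(d)$. The final ingredient is that the reference symbol $d$ introduced in the proof of Theorem~\ref{theorem 1} is unit-norm, so that $\cos\angle(d)=\mathrm{Re}\{d\}$ and $\sin\angle(d)=\mathrm{Im}\{d\}$; this is precisely what converts the trigonometric factors into the Cartesian components appearing in the claimed $\mathcal{C}_1,\mathcal{C}_2$. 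Since this map is invertible---from $(\mathcal{I}_j,\mathcal{Q}_j)$ one recovers $|r_j|$ and $\angle(r_j)$ uniquely---the two constraint pairs cut out the same feasible set, and the two programs therefore share the same minimizer.

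I expect the main subtlety to lie not in the equality case sketched above but in justifying the relaxation symbol $\unlhd$: once the hard magnitude and phase equalities are loosened to the correct-detection-region conditions, one must verify that the half-plane (or box) regions carved out in the I-Q plane by $\mathcal{I}_j\unlhd\sqrt{\zeta_j}\sigma_z\,\mathrm{Re}\{d\}$ and $\mathcal{Q}_j\unlhd\sqrt{\zeta_j}\sigma_z\,\mathrm{Im}\{d\}$ coincide with the detection region induced by the loosened magnitude--phase constraints for the constellation point at hand. I would address this by appealing to the per-point detection-region description deferred to Section~\ref{sec: Symbol-level Precoding with Multi-level Modulation}, choosing the orientation of $\unlhd$ according to whether the point is inner, outer, or outermost so that the two descriptions agree; in the strict-equality case the direction of $\unlhd$ is immaterial and the equivalence is immediate.
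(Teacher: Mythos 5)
Your core argument is sound and is, in effect, the proof the paper leaves implicit: the corollary is stated without any proof, being treated as an immediate consequence of Theorem 1 together with the bijection between the polar and Cartesian representations of the effective received scalar $\mathbf{\hat h}_j\mathbf{x}$. Your conjugate decomposition $\mathcal{I}_j=\tfrac{1}{2}\left(\mathbf{\hat h}_j\mathbf{x}+(\mathbf{\hat h}_j\mathbf{x})^\ast\right)$, $\mathcal{Q}_j=\tfrac{1}{2i}\left(\mathbf{\hat h}_j\mathbf{x}-(\mathbf{\hat h}_j\mathbf{x})^\ast\right)$, together with the use of the unit-norm reference symbol $d$ to turn $\cos\angle(d)$ and $\sin\angle(d)$ into $Re\{d\}$ and $Im\{d\}$, is exactly the intended translation, and in the equality (inner-constellation) case your invertibility argument gives a genuine equivalence of the two programs.

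The one step in your plan that would fail, however, is the proposed verification that, after relaxation, the region cut out by $\mathcal{I}_j\unlhd\sqrt{\zeta_j}\sigma_z Re\{d\}$ and $\mathcal{Q}_j\unlhd\sqrt{\zeta_j}\sigma_z Im\{d\}$ coincides with the region induced by the loosened magnitude--phase constraints of \eqref{CI_GMOD}. It does not. In \eqref{CI_GMOD} the phase constraint remains a hard equality, so its feasible set (per user, in the complex plane of the received scalar) is a radial ray $\{t e^{i\angle(d_j)}:\ t\geq \kappa_j\sqrt{\zeta_j}\sigma_z\}$ when $\unlhd$ is read as $\geq$. For an outermost QAM point this ray is a proper subset of the box/half-plane region of the I-Q form, so the two feasible sets differ (the I-Q program is a relaxation and can attain strictly lower power); for an outer, non-corner point the radial ray even exits the true detection region --- scaling $3+i$ radially eventually crosses into the decision region of $3+3i$ --- whereas the I-Q form correctly pins the quadrature component with an equality. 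Hence literal set coincidence is unobtainable, and the equivalence must be read the way the paper's notation prescribes: $\unlhd$ is a placeholder for the per-constellation-point constraints of Section \ref{sec: Symbol-level Precoding with Multi-level Modulation}, i.e.\ both formulations are shorthand for ``the noiseless received signal attains the scaled nominal point or deviates from it only in detection-preserving directions,'' and the I-Q description is precisely the definition of those directions for multi-level constellations. Your fallback --- selecting and orienting the (in)equalities per point as in Section \ref{sec: Symbol-level Precoding with Multi-level Modulation} --- is therefore not a patch to be verified but the actual content of the corollary; exact equivalence of \eqref{CI_GMOD} and the I-Q program holds only where both constraints are equalities.
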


\begin{remark}
 The PHY-layer multicasting problem in \eqref{eq:Multicast} is based on constraints in the power domain (amplitude only), while the symbol-level precoding problems in \eqref{CMU_GMOD_x} and \eqref{IQ} are based on constraints in the signal domain (both amplitude and phase). This lower-level optimization is enabled by the fact that the all components (both symbols and channel) that affect the user received signal are taken into account in symbol-level precoding.
\end{remark}


\subsection{Constructive Interference Power Minimization (CIPM) for Multi-level Modulation}
\label{Power Minimization with SINR Constraints}
The power minimization with SINR constraints can be expressed as:
\begin{eqnarray}
\label{eq: min power with SINR}
\nonumber
\mathbf{x}&=&\arg\underset{\mathbf{x}}{\min}\quad \|\mathbf{x}\|^2\\
&s.t.& \begin{cases}\mathcal{C}_1:\mathcal{I}_j\unlhd\sqrt{\zeta_j}\sigma_z {Re}\{d_j\}, \forall j\in K\\
\mathcal{C}_2:\mathcal{Q}_j\unlhd\sqrt{\zeta_j}\sigma_z {Im}\{d_j\}, \forall j\in K.
\end{cases}
\end{eqnarray}
For any practical modulation scheme, the above problem can be solved by constructing appropriate $\mathcal{C}_1,\mathcal{C}_2$ constraints as explained in sec. \ref{sec: Symbol-level Precoding with Multi-level Modulation}. Subsequently, an equivalent channel can be constructed and $\mathbf{x}$ can be straightforwardly calculated using Theorem 1. 
\begin{theorem}
The symbol-level precoding can be solved by finding the Lagrange function of \eqref{IQ} which can be expressed as:
\begin{eqnarray}\nonumber
\mathcal{L}(\mathbf{x})=\mathbf{x}^H\mathbf{x}&+&\sum_j \lambda_j(\mathcal{I}_j(\mathbf{x})-\sqrt{\zeta_j}\sigma_z Re\{d_j\})\\
&+&\sum_{j}\mu_j(\mathcal{Q}_j(\mathbf{x})-\sqrt{\zeta_j}\sigma_z Im\{d_j\}).\end{eqnarray}
The derivative of $\mathcal{L}(\mathbf{x})$ with respect to $\mathbf{x}^{*}$, $\lambda_j$, and $\mu_j$ can be expressed:
\begin{eqnarray}
\frac{\partial\mathcal{L}(\mathbf{x})}{\partial\mathbf{x}^*}&=&\mathbf{x}+\sum_j \lambda_j\frac{d\mathcal{I}_j(\mathbf{x})}{d\mathbf{x}^*}+\sum_{j}\mu_j\frac{d\mathcal{Q}_j(\mathbf{x})}{d\mathbf{x}^*},\\
\frac{\partial\mathcal{L}(\mathbf{x})}{\partial\lambda_j}&=&\mathcal{I}_j(\mathbf{x})-\sqrt{\zeta_j}\sigma_z Re\{d_j\},\\
\frac{\partial\mathcal{L}(\mathbf{x})}{\partial\mu_j}&=&\mathcal{Q}_j(\mathbf{x})-\sqrt{\zeta_j}\sigma_z Im\{d_j\}.
\end{eqnarray}
By setting $\frac{\partial\mathcal{L}(\mathbf{x})}{\partial\mathbf{x}^*}=0$, $\frac{\partial\mathcal{L}(\mathbf{x})}{\partial\lambda_j}=0$, and $\frac{\partial\mathcal{L}(\mathbf{x})}{\partial\mu_j}=0$, we can formulate the following set of equations: 
\begin{eqnarray}
\label{der1}
&\mathbf{x}&= \sum -\lambda_j\frac{d\mathcal{I}_j(\mathbf{x})}{d\mathbf{x}^*}+\sum_{j}-\mu_j\frac{d\mathcal{Q}_j(\mathbf{x})}{d\mathbf{x}^*},\\
\label{der2}
&\mathcal{I}_j&(\mathbf{x})\unlhd\sqrt{\zeta_j}\sigma_z Re\{d_j\},\\
\label{der3}
&\mathcal{Q}_j&(\mathbf{x})\unlhd\sqrt{\zeta_j}\sigma_z Im\{d_j\}.
\end{eqnarray}
\end{theorem}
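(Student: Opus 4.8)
The plan is to treat the optimization \eqref{IQ} as a convex program and to derive its Karush--Kuhn--Tucker (KKT) system via Wirtinger (complex-gradient) calculus. First I would establish convexity: the objective $\mathbf{x}^H\mathbf{x}$ is a positive-semidefinite quadratic form, while the in-phase and quadrature constraint functions are \emph{affine} in the real and imaginary parts of $\mathbf{x}$, as is transparent from the closed forms $\mathcal{I}_j=\tfrac{1}{2}(\hat{\mathbf h}_j\mathbf{x}+(\hat{\mathbf h}_j\mathbf{x})^\ast)$ and $\mathcal{Q}_j=\tfrac{1}{2i}(\hat{\mathbf h}_j\mathbf{x}-(\hat{\mathbf h}_j\mathbf{x})^\ast)$ given in the Corollary. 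Hence the feasible region is polyhedral, the problem is convex, and (the affine constraint qualification holding automatically) the KKT conditions are both necessary and sufficient for global optimality. This is exactly what licenses solving the problem through its Lagrangian.

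Next I would form the Lagrangian by attaching one scalar multiplier to each constraint, $\lambda_j$ to the $\mathcal{C}_1$ inequality and $\mu_j$ to the $\mathcal{C}_2$ inequality, which reproduces the stated $\mathcal{L}(\mathbf{x})$. The core computation is the stationarity condition $\partial\mathcal{L}/\partial\mathbf{x}^\ast=\mathbf{0}$. Using the standard Wirtinger identities $\partial(\mathbf{x}^H\mathbf{x})/\partial\mathbf{x}^\ast=\mathbf{x}$, $\partial(\hat{\mathbf h}_j\mathbf{x})/\partial\mathbf{x}^\ast=\mathbf{0}$, and $\partial(\hat{\mathbf h}_j\mathbf{x})^\ast/\partial\mathbf{x}^\ast=\hat{\mathbf h}_j^H$, I would substitute into the differentials of $\mathcal{I}_j$ and $\mathcal{Q}_j$ to obtain the compact coefficients $d\mathcal{I}_j/d\mathbf{x}^\ast$ and $d\mathcal{Q}_j/d\mathbf{x}^\ast$; collecting terms and equating to zero yields \eqref{der1}. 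The remaining two derivatives $\partial\mathcal{L}/\partial\lambda_j$ and $\partial\mathcal{L}/\partial\mu_j$ simply return the constraint residuals, so setting them to zero reproduces primal feasibility in the active (boundary) form, namely \eqref{der2}--\eqref{der3}.

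The one genuinely delicate step — the part I would treat most carefully — is the $\unlhd$ relation, which encodes an inequality whose \emph{orientation}, and therefore the admissible sign of the associated multiplier, depends on the constellation class of $d_j$: an outer or outermost symbol has a one-sided detection region, whereas an inner symbol is two-sided, so the corresponding constraint may be $\ge$, $\le$, or an equality. I would resolve this by partitioning the users according to the multi-level construction of Section \ref{sec: Symbol-level Precoding with Multi-level Modulation} and, per class, fixing the direction of each inequality together with the sign constraint $\lambda_j\gtrless 0$ (respectively $\mu_j$) and the complementary-slackness relation $\lambda_j(\mathcal{I}_j(\mathbf{x})-\sqrt{\zeta_j}\sigma_z Re\{d_j\})=0$. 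With these conventions the stationarity identity \eqref{der1} together with the feasibility relations \eqref{der2}--\eqref{der3} constitute the complete KKT system, and convexity guarantees that any $\mathbf{x}$ satisfying them is the global minimizer, which establishes the claim.
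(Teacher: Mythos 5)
Your proposal is correct and follows essentially the same route as the paper: form the Lagrangian of \eqref{IQ}, differentiate with respect to $\mathbf{x}^{*}$, $\lambda_j$, and $\mu_j$ using complex (Wirtinger) calculus, and set the derivatives to zero to obtain the stationarity and feasibility system \eqref{der1}--\eqref{der3}, which is then solved as the set of equations \eqref{multicasteq}. Your treatment is in fact somewhat more rigorous than the paper's, which never invokes convexity, KKT sufficiency, or complementary slackness and simply writes the multiplier-derivative conditions directly with the detection-region relation $\unlhd$; your explicit handling of the inequality orientation per constellation class and the associated multiplier sign constraints closes that gap rather than departing from the paper's method.
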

Using \eqref{der1}-\eqref{der3}, the solution of (\ref{eq: min power with SINR}) can be found by solving the set of equations as \eqref{multicasteq}. In \eqref{multicasteq}, $\rho_{jk}$ is $\frac{\mathbf{h}_j\mathbf{h}^H_k}{\|\mathbf{h}_j\|\|\mathbf{h}_k\|}$.Using the formulation \eqref{eq: min power with SINR} to optimize the symbol-level precoding, the problem can be directly connected to \textit{directional modulation} \cite{directional_modulational_1}-\cite{directional_modulational_2}.


\begin{figure*}[ht]
	\begin{tabular}[t]{c}
		\begin{minipage}{17 cm}
			\begin{eqnarray}
			\label{multicasteq}
			\begin{array}{cccc}
			\label{setoo}
			0.5\|\mathbf{h}_1\|(\sum_k(-\mu_k+\lambda_ki)\|\mathbf{h}_k\|\rho_{1k}&-&\sum_k(-\mu_k+\lambda_ki)\|\mathbf{h}_k\|\rho^{*}_{1k})\unlhd\sigma_z\sqrt{\zeta^{}_{1}}{Im}(d_1)\\
			0.5\|\mathbf{h}_1\|(\sum_k(-\mu_ki-\lambda_k)\|\mathbf{h}_k\|\rho_{1k}&+&\sum_k(-\mu_ki-\lambda_k)\|\mathbf{h}_k\|\rho^{*}_{1k})\unlhd\sigma_z\sqrt{\zeta^{}_{1}}{Re}(d_1)\\
			\quad&\vdots&\\
			0.5\|\mathbf{h}_K\|(\sum_k(-\mu_k+\lambda_ki)\|\mathbf{h}_k\|\rho_{Kk}&-&\sum_k(-\mu_k+\lambda_ki)\|\mathbf{h}_k\|\rho^{*}_{Kk})\unlhd\sigma_z\sqrt{\zeta_{K}}\mathcal{I}(d_K)\\
			0.5\|\mathbf{h}_K\|(\sum_k(-\mu_ki-\lambda_k)\|\mathbf{h}_k\|\rho_{Kk}&+&\sum_k(-\mu_ki-\lambda_k)\|\mathbf{h}_k\|\rho^{*}_{Kk})\unlhd\sigma_z\sqrt{\zeta_{K}}\mathcal{R}(d_K)\\
			\end{array}
			\end{eqnarray}
		\end{minipage}\\
		\hline
		\hline
	\end{tabular}
\end{figure*}

\section{Symbol-level Precoding with Multi-level Modulation}
\label{sec: Symbol-level Precoding with Multi-level Modulation}

For practical constellations, we can rewrite the constraints $\mathcal{C}_1$ and $\mathcal{C}_2$  to exploit the specific detection regions which depend on the type of modulation and the constellation point. In the following paragraphs, we specify the constrains for a number of typical modulation types, but the same rationale can be straightforwardly applied to other modulation types.
\begin{figure}[h]
	\vspace{-0.6cm}\begin{center}
		\includegraphics[width=0.8\linewidth]{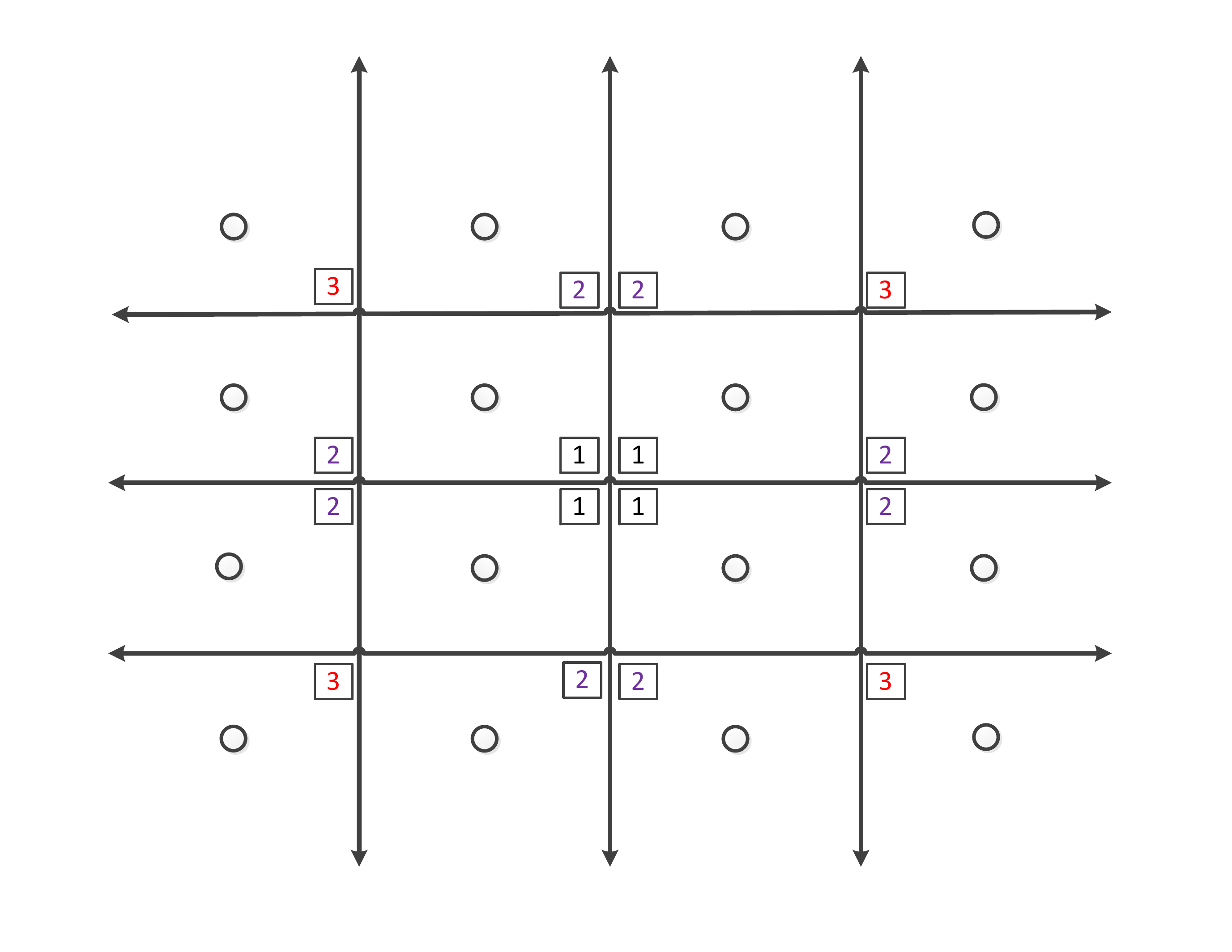}
		\vspace{-0.5cm}
		\caption{\label{qam}Classification of the constellation points for a 16-QAM modulation into inner(1), outer(2) and outermost(3). }
	\end{center}
\end{figure}

\subsection{MQAM}
\label{sec_MQAM}

For MQAM (see Fig. \ref{qam}), detailed expressions for $\mathcal{C}_1$, $\mathcal{C}_2$ can be written as

\begin{itemize}
\item For the inner-constellation symbols, the constraints $\mathcal{C}_1$, $\mathcal{C}_2$ should guarantee that the received signals achieve the exact constellation point. For 16-QAM as depicted in Fig. (\ref{qam}), the symbols marked by 1 should be received with the exact symbols. The constraints can be written as    
\begin{eqnarray}\nonumber
&\mathcal{C}_1:&\mathcal{I}_{j}=\sqrt{\zeta_j} \sigma_z Re\{d_j\}\\\nonumber
&\mathcal{C}_2:&\mathcal{Q}_{j}=\sqrt{\zeta_j} \sigma_z Im\{d_j\}.
\end{eqnarray}

\item Outer constellation symbols, the constraints $\mathcal{C}_1$, $\mathcal{C}_2$ should guarantee the received signals lie in the correct detection. For 16-QAM as depicted in Fig. (\ref{qam}), the symbols marked by 2 should be received within the correct detection regions of the symbols. The constraints can be written as     
\begin{eqnarray}\nonumber
&\mathcal{C}_1:&\mathcal{I}_{j}\unlhd\sqrt{\zeta_j}\sigma_z{Re}\{d_j\}\\\nonumber
&\mathcal{C}_2:&\mathcal{Q}_{j}=\sqrt{\zeta_j}\sigma_z Im\{d_j\},
\end{eqnarray}
\begin{eqnarray}\nonumber
&\mathcal{C}_1:&\mathcal{I}_{j}=\sqrt{\zeta_j}\sigma_z Re\{d_j\}\\\nonumber
&\mathcal{C}_2:&\mathcal{Q}_{j}\unlhd\sqrt{\zeta_j}\sigma_z{Im}\{d_j\}.
\end{eqnarray}
\item Outermost constellation symbols, the constraints $\mathcal{C}_1$, $\mathcal{C}_2$ should guarantee the received signals lie in the correct detection. For 16-QAM as depicted in Fig. (\ref{qam}), the symbols marked by 3 should be received within the correct detection regions of the symbols. The constraints can be written as     
\begin{eqnarray}\nonumber
&\mathcal{C}_1:&\mathcal{I}_{j}\unlhd\sqrt{\zeta_j}\sigma_z Re\{d_j\}\\\nonumber
&\mathcal{C}_2:&\mathcal{Q}_{j}\unlhd\sqrt{\zeta_j} \sigma_z Im\{d_j\}.
\end{eqnarray} 
The sign $\unlhd$ indicates that the symbols should locate in the correct detection region, for the symbols in the first quadrant $\unlhd$ means $\geq$.
\end{itemize}

Following the same rationale, $\mathcal{C}_1$, $\mathcal{C}_2$ can be defined for any MQAM constellation.

\begin{remark} The outermost points of multi-level modulations (e.g. denoted by 3 in Fig.\ref{qam}) have more flexible detection regions, since the symbol can be received correctly even it moves deeper into the detection region. This concept has been thoroughly investigated in \cite{maha}\cite{maha_TSP} for MPSK, where it was shown that this flexibility can lead to performance gains. In the previous sections, the same has been straightforwardly extended for multi-level modulations by using inequalities for the in-phase and quadrature constraints of the outermost symbols (see section \ref{sec_MQAM}). However, it should be noted that as we move into higher order constellations the effect of this flexibility is expected to diminish due to the large number of equality constraints. In these cases, the performance gain arises mainly from the multicast characterization rather than the flexible detection regions.   
\end{remark}

\section{Equivalent Channel distribution}
\label{Equivalent}
The distribution of the equivalent fast fading channel $\hat{\mathbf{h}}_j$ can be derived taking into the account the adopted modulation. For equiprobable constellation of size of $2^M$, the joint probability mass function (PMF) of user's symbol power and phase can be written as:
\begin{eqnarray}
\label{pdf_constellation}
f_{\gamma,\theta}(\kappa,\theta)=\sum^{2^M}_{k=1}\frac{1}{2^M}\delta(\theta-\theta_k)\delta(\gamma-\gamma_k),
\end{eqnarray}
where $\theta_k=\angle d_k$,  $\gamma_k=\kappa^2_k$ and $\delta(x)$ is the Dirac function. The marginal PMF for the symbol's phase can be formulated:
\begin{eqnarray}
\label{pdf_marginal_phase}
f_{\theta}(\theta)=\sum^{\tilde{M}}_{k=1}P_{\theta_k}\delta(\theta-\theta_k),
\end{eqnarray}
where $\tilde{M}$ is the number of possible phases in each constellation. For example in 16-QAM constellation, we have twelve different phases. $P_{\theta_k}$ is the probability that the user's symbol has the phase $\theta_k$. The marginal PMF of the symbol's amplitude can be expressed as:
\begin{eqnarray}
\label{pdf_marginal_amplitude}
f_{\gamma}(\gamma)=\sum^{\hat{M}}_{k=1}P_{\gamma_k}\delta(\gamma-\gamma_k),
\end{eqnarray}
where $\hat{M}$ is the number of possible symbol's amplitude in each  constellation.  For example in 16-QAM constellation,  we have three different symbols amplitude. $P_{\gamma_k}$ is the probability of having \\
$\gamma_k$ as a symbol power. Let us define a random variable that represents the equivalent channel power distribution as $z=\frac{x}{\gamma}$, where $x$ is the random variable for the channel power $\|\mathbf{h}_k\|^2$. The probability density function (PDF) for a division of two random variables can be formulated as\cite{Papoulis}:
\begin{eqnarray}
f_z(z)=\int^{\infty}_{\infty}\rvert \gamma\lvert f_{x\gamma}(\gamma z,\gamma) d\gamma=\int^{\infty}_{\infty}\rvert \gamma\lvert f_x(\gamma z)f_\gamma(\gamma) d\gamma.
\end{eqnarray}
For any generic channel, the probability density function
can be formulated as:
\begin{eqnarray}
\label{Division}
f_z(z)=\sum^{\hat{M}}_{k=1}P_{\zeta_k}\zeta_kf_{}(\zeta_kz), 
\end{eqnarray}
If the channel between the multiple-antenna BS and the users has a Rayleigh distribution, the power of the channel follows a Gamma distribution as:  
\begin{eqnarray}
f_x(x)=\frac{x^{N_t-1}\beta^{N_t}}{\Gamma(N_t)}\exp(-\beta x),
\end{eqnarray}
where $\frac{1}{\beta}$ is the channel power. The equivalent channel power distribution has the following expression:
\begin{eqnarray}
\label{derived}
f_z(z)=\sum^{\hat{M}}_{k=1} P_{\zeta_{k}}\frac{\beta^{N_t}}{\Gamma(N_t)}\zeta^{N_t}_kz^{N_t-1}\exp(-\beta\zeta_k z).
\end{eqnarray}
The cumulative distribution function (CDF) can be formulated as: 
\vspace{-0.2cm}
\begin{eqnarray}
F_z(z)=\sum^{\hat{M}}_{k=1}\sum^\infty_{j=N_t} P_{\zeta_{k}}\frac{\beta^{N_t}}{\Gamma(N_t)} \frac{(\beta\zeta_kz)^j}{j!},
\end{eqnarray}
using \eqref{derived}, the mean of $z$ can be found as:
\begin{eqnarray}
	\mathbb{E}[z]=\sum^{\hat{M}}_{k=1}P_{\zeta_{k}}=\frac{N_t}{\beta}.
\end{eqnarray}
Fig. \ref{distribution} depicts  the simulated PDF and the derived expression  for the equivalent channel \eqref{derived}, it can be noted that the two PDFs match each other. Fig. \ref{distribution_2}  captures the difference between the power distribution of the actual and equivalent channels. Using \eqref{Division}, the distribution of the amplitude of the equivalent channel can be derived as \eqref{derived}. The final expression can be formulated as:
\begin{eqnarray}
f_u(u)=\frac{2\beta^{N_t}}{\Gamma(N_t)}\zeta^{N_t}_ku^{2N_t-1}\exp(-\beta\zeta_ku).
\end{eqnarray}
If $h_{ij}$ is the channel between the $i^{th}$ antenna and $j^{th}$ user, the channel's phase $\angle{h}_{ij}$ has a uniform distribution $U(0,2\pi)$. The phase distribution does not change considering  the equivalent channel.
\begin{figure}[hh]
\begin{center}
	\includegraphics[width=1\linewidth]{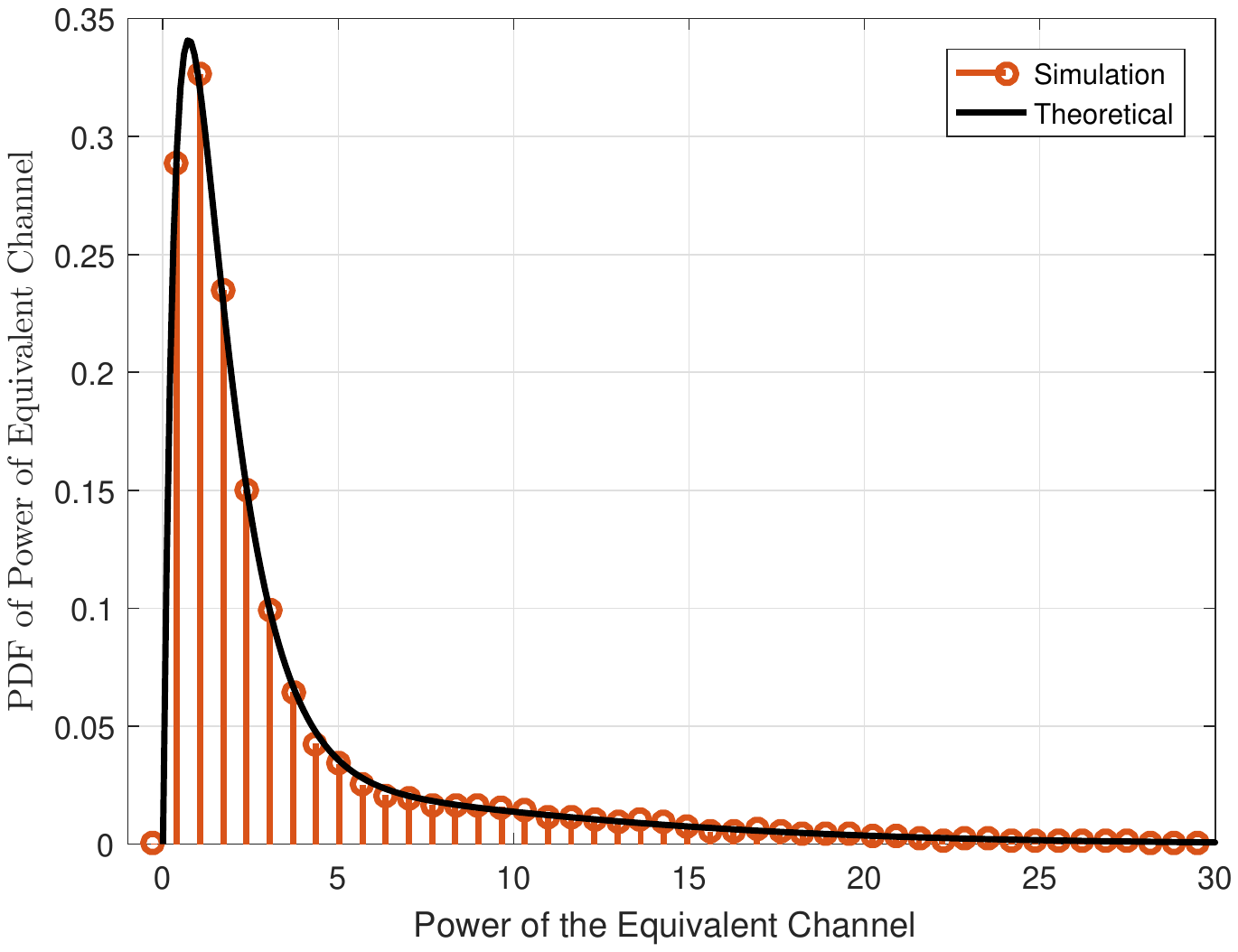}
	\caption{\label{distribution} The PDF for the equivalent channel power. The assumed scenario is 16QAM, $N_t=2$, $K=2$. }
	\includegraphics[width=0.7\linewidth]{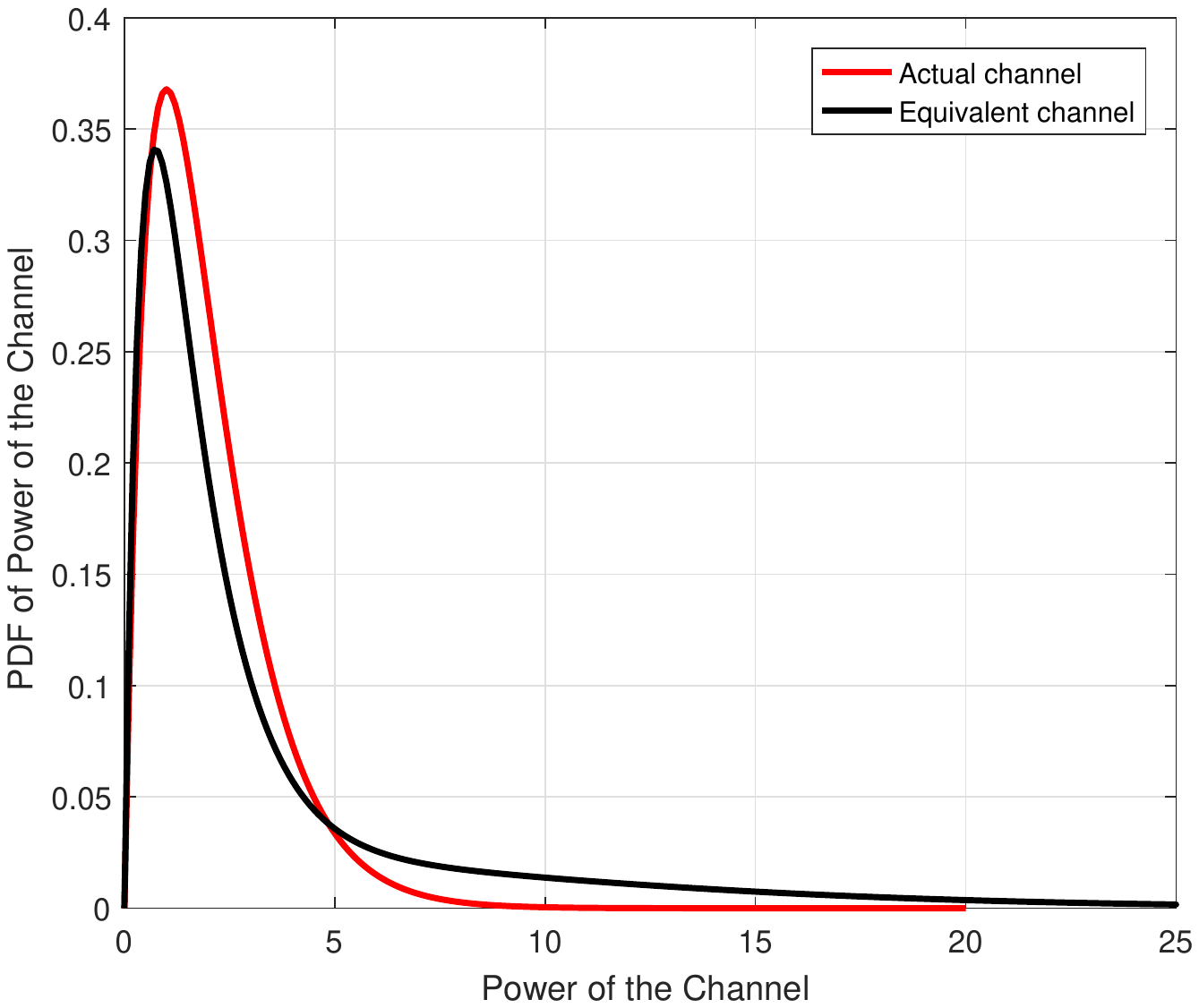}
	\vspace{-0.5cm}
	\caption{\label{distribution_2} The PDF for the Rayleigh and the equivalent channel power. The assumed scenario is 16QAM, $N_t=2$, $K=2$. }
	\end{center}
\end{figure}
\begin{remark}
	
	Constructing the equivalent channel does not induce any correlation among the users' channels. This can be proven by the fact that $\mathbf{A}$ is diagonal matrix and the $j^{th}$ diagonal element affects only the $j^{th}$ user channel. As a result,	if all users have the same channel distribution and adopt the same modulation, the distribution of the equivalent channel for all users is the same. 
\end{remark}

\section{Symbol Level Power Minimization with Goodput Constraints}
\label{symbol level}

The problem of power minimization has been addressed in numerous papers in the literature \cite{mats}-\cite{boche}. In the vast majority of previous works, the constraints were expressed in terms of SINR, since there is a straightforward connection between the SINR $\zeta$ and throughput rate $R$ when Gaussian coding is assumed:
\begin{eqnarray}
R_j=\log_2(1+\bar{\zeta}_j).
\end{eqnarray}
where $\bar{\zeta}_j$ is the average SINR over the frame. In conventional precoding, $\bar{\zeta}$ does not change with channel and it is used to allocate the appropriate modulation. However, when symbol-level precoding is employed in combination with adaptive multi-level modulation, this simple analytical connection does not hold anymore. In this case, the effective throughput rate or goodput\footnote{These two terms are used interchangeably across this paper.} $\bar{R}$ depends on:
\begin{itemize}
\item The assigned modulation $m$, which sets the upper bound on the supported rate $R$ in number of bits per symbol according to the predefined SINR thresholds associated with each modulation.
\item The achieved SINR $\zeta_j$, which determines the operating point on the SER curve and it is expressed by :
\begin{eqnarray}
\label{effective_rate}
\hspace{-0.4cm}\bar{R}_j=f(m_j, SER(\zeta_j))=R_j(m_j)(1-SER(\zeta_j,m_j)).
\end{eqnarray}
\end{itemize}
 Let us denote the consumed power for each of the $N$ symbol vectors in a frame as $P[n],n=1\ldots N$. The objective is to minimize the total power consumed while transmitting the whole frame, i.e. $\sum_{i=1}^N P[n]$. 
Assuming symbol-level precoding with adaptive multi-level modulation, the frame power minimization problem with goodput constraints can be expressed as:
\begin{eqnarray}
\label{slp_gp}
\underset{\mathbf{x}}{\min} \quad\mathbb{E}_{n}\big[P[n]\big]=  \mathbb{E}_n[\underset{\mathbf{x[n]}}{\min} P[n]]
\end{eqnarray}
given that the power constraint is applied on a symbol vector basis. Dropping the symbol index $n$, for each symbol vector the transmitted precoded signal that minimizes the power $P=\|\mathbf{x}\|^2$ has to be calculated as  in previous sections.
\begin{remark} The above problem is always feasible, as the power can scale freely to ensure that the SINR constraints can be satisfied for all effective channels resulting from different symbol vectors in a frame. 
\end{remark}
In the following sections, we first address the power minimization problem with SINR constraints $\mathcal{C}_1$, $\mathcal{C}_2$ and then we build on it to develop a solution with goodput constraints $\bar{R}_j\geq r_j$, where $\bar{R}_j$, $r_j$ are the effective rates and target rates (throughput) respectively. 

\subsection{Power Minimization with Goodput Constraints}
\label{Power Minimization with Goodput Constraints}

Using \eqref{slp_gp}, the frame power minimization with goodput constraints can be expressed as:
\begin{eqnarray}
\label{eq: min power with goodput}
&\mathbf{x}=&\arg\underset{\mathbf{x}}{\min}\quad \mathbb{E}_n [\|\mathbf{x}\|^2]\\\nonumber
&s.t.& \bar{R}_j\geq r_j, \forall j\in K,
\end{eqnarray}
using \eqref{effective_rate}, it can be written as:
\begin{eqnarray}
\label{eq: min power with goodput}
&\mathbf{x}=&\arg\underset{\mathbf{x}}{\min}\quad \mathbb{E}_n [\|\mathbf{x}\|^2]\\\nonumber
&s.t.&R_j(m_j)(1-SER(\zeta_j,m_j))\geq r_j, \forall j\in K.
\end{eqnarray}
Assuming the modulation of each user is known, the problem can be formulated on symbol-level basis as:
\begin{eqnarray}\nonumber
&\mathbf{x}[n]&=\arg\mathbb{E}_n\big[\underset{\mathbf{x}[n]}{\min}\|\mathbf{x}[n]\|^2\big]\\
&s.t.& \zeta_j[n]\geq \bar{\zeta_j}, \forall j\in K,
\end{eqnarray}
where $\bar{\zeta_j}=\frac{\mathbb{E}_n\Big[|\mathbf{h}\mathbf{x}[n]|^2\Big]}{\sigma^2}$  and $\zeta_j[n]=\frac{|\mathbf{h}_j\mathbf{x}[n]|^2}{\sigma^2}$. $\bar{\zeta}$ is the average received signal power over multiple symbols normalized by the noise variance, while $\zeta$ is the instantaneous received signal power for the nth symbol vector normalized by the noise variance
it should be noted that this is a stricter constraint that the previous one since the received power constraint applies per symbol and not in average. Finally, the optimization can be formulated as:
\begin{eqnarray}\nonumber
&\mathbf{x}[n]&=\arg\mathbb{E}_n\big[\underset{\mathbf{x}[n]}{\min}\|\mathbf{x}[n]\|^2\big]\\
&s.t.& \begin{cases}\mathcal{C}_1:\mathcal{I}_j[n]\unlhd\kappa_j[n]\sqrt{\bar{\zeta_j}}\sigma_z {Re}\{d_j[n]\}, \forall j\in K\\
\mathcal{C}_2:\mathcal{Q}_j[n]\unlhd\kappa_j[n]\sqrt{\bar{\zeta_j}}\sigma_z {Im}\{d_j[n]\}, \forall j\in K.
\end{cases}
\end{eqnarray}
The proposed algorithm, which is used to determine the value of $\bar{\zeta_j}$, can be summarized in the following steps:
\begin{enumerate}
\item The first step in solving this problem is allocating a modulation type $m$ for each user. Based on the adaptive modulation rules of table I, we select the lowest modulation that can achieve the target goodput of each user.
\begin{eqnarray}
R_{l-1}\leq r_j \leq R_l \textnormal{  iff  } m_j=l.
\end{eqnarray}
\item In the second step, the goodput constraints $r$ can be converted into average SINR constraints $\bar{\zeta}$, given that the modulation types $m$ have been already fixed. This can be performed by exploiting the analytical connection between the SER and the SINR. In more detail, the required SER for a specific goodput constraint $r$ is given by:
\begin{eqnarray}
\label{eq: SER from rate}
SER(\bar{\zeta},m)=1-r/R(m),
\end{eqnarray}
and the required SINR for MQAM is expressed as a function of SER as follows \cite{proakis}: 
\begin{eqnarray}
\bar{\zeta}\leq\frac{2^R-1}{3R}\left(Q^{-1}\left(\frac{SER}{4}\right)\right)^2. \end{eqnarray}


%
\end{enumerate}

\section{Numerical Results}
\label{Numerical Results}
\begin{table}
\begin{center}
\hspace{-0.2cm}\begin{tabular}{|p{1cm}|p{5cm}|p{1.5cm}|}
\hline
Acronym&Technique&equation\\
\hline
CIPM& Constructive Interference- Power Minimization&\eqref{eq: min power with SINR}\\
\hline
Multicast&Optimal Multicast &\eqref{eq:Multicast},\cite{multicast}\\
\hline
OB&Optimal user level beamforming&\eqref{ob},\cite{mats}\\
\hline
\end{tabular}
\vspace{0.2cm}
\caption{Summary of the proposed, state-of-the-art  algorithms and the theoretical lower bound, their related acronyms, and
their related equations and algorithms}
\end{center}
\end{table}

Before discussing the numerical results, let us denote 1) the symbol-level power consumption by $P[n]=\|\sum^K_{k=1}\mathbf{w}_kd_k\|^2={\|\mathbf{x}\|^2}$ and 2) the frame-level power consumption  (average over over a large number of symbols) by $\bar P=\mathbb{E}_n[P[n]]$. Let us also define the system energy efficiency as:
\begin{eqnarray}
\eta=\frac{\sum_{j=1}^K\bar{R}_j(SER_j,m_j)}{\bar P},
\end{eqnarray}
which is going to be used as an additional performance metric that combines the system goodput with the required power. 
For the sake of comparison with an achievable user-level precoding method, we use the power minimization objective for user-level linear beamforming which is defined as:
\begin{eqnarray}\nonumber
\label{ob}
\hspace{-0.3cm}\mathbf{w}_k=&\arg\underset{\mathbf{w}_k}{\min}&\quad \sum^K_{j=1}\|\mathbf{w}_k\|^2\\
&s.t.&\frac{\|\mathbf{h}_j\mathbf{w}_j\|^2}{\sum^K_{k\neq j,k=1}\|\mathbf{h}_j\mathbf{w}_k\|^2+\sigma^2_z}\geq \zeta_j,\forall j\in K.
\end{eqnarray}

This problem has been efficiently solved in the literature \cite{mats}. It should be noted here that the above user-level precoders are calculated only once per frame and are subsequently applied unaltered to all input symbol vectors. In this direction, the target is to minimize the average power per frame under average SINR constraints. On the contrary, the proposed CIPM algorithm minimizes the instantaneous transmit power per input symbol vector and guarantees that the target SINR is achieved for each input symbol vector. As a result, a higher energy efficiency can be achieved while ensuring the SER across the whole frame.  
As a theoretical bound (lower-bound for transmission power and upper-bound for energy efficiency), we utilize the PHY-layer multicasting \cite{multicast} as in \eqref{eq:Multicast}.

For 8-QAM, the constraints $\mathcal{C}_1$, $\mathcal{C}_2$ for each symbol can be written in detail as:
 \begin{eqnarray}\nonumber
\mathcal{C}_1=\begin{cases}\mathcal{I}_j=\sigma_z\sqrt{\frac{\zeta_j}{3}}{Re}\{d_j\}, d_j=\frac{\pm 1\pm i}{\sqrt{2}}\\
\mathcal{I}_j\geq{\sigma_z\frac{\sqrt{\zeta_j}}{\sqrt{3}}}Re\{d_j\}, d_j=\frac{3+i}{\sqrt{2}},\frac{3-i}{\sqrt{2}}\\
\mathcal{I}_j\leq{\sigma_z\frac{\sqrt{\zeta_j}}{\sqrt{3}}}{Re}\{d_j\}, d_j=\frac{-3+i}{\sqrt{2}},\frac{-3-i}{\sqrt{2}}\end{cases}
\end{eqnarray}

 \begin{eqnarray}\nonumber
\mathcal{C}_2=\begin{cases}
\mathcal{Q}_j\geq {\sigma_z\sqrt{\frac{\zeta_j}{3}}}{Im}\{d_j\}, d_j=\frac{\pm 1+i}{\sqrt{2}},\frac{\pm 3+i}{\sqrt{2}},\\
\mathcal{Q}_j\leq{\sigma_z\sqrt{\frac{\zeta_j}{3}}}{Im}\{d_j\}, d_j=\frac{\pm 1-i}{\sqrt{2}},\frac{\pm 3-i}{\sqrt{2}}\end{cases}
\end{eqnarray}
 For the 16-QAM modulation, the constraints $\mathcal{C}_1$, $\mathcal{C}_2$ can be expressed as
 \begin{eqnarray}\nonumber
\mathcal{C}_1=\begin{cases}\mathcal{I}_j=\sigma_z\sqrt{\frac{\zeta_j}{5}}{Re}\{d_j\}, d_j=\frac{\pm 1+\pm i}{\sqrt{2}}, \frac{\pm 1+\pm 3i}{\sqrt{2}}\\
\mathcal{I}_j\geq{\sigma_z\sqrt{\frac{\zeta_j}{5}}}{Re}\{d_j\}, d_j=\frac{3+i}{\sqrt{2}},\frac{3-i}{\sqrt{2}},\frac{3+3i}{\sqrt{2}},\frac{3-3i}{\sqrt{2}}\\
\mathcal{I}_j\leq{2\sigma_z\sqrt{\frac{\zeta_j}{5}}}{Re}\{d_j\}, d_j=\frac{-3+i}{\sqrt{2}},\frac{-3-i}{\sqrt{2}}, \frac{-3+3i}{\sqrt{2}}, \frac{-3-3i}{\sqrt{2}}\end{cases}
\end{eqnarray}

 \begin{eqnarray}\nonumber
\mathcal{C}_2=\begin{cases}\mathcal{Q}_j=\sigma_z\sqrt{\frac{\zeta_j}{5}}{Im}\{d_j\}, d_j=\frac{\pm 1+\pm i}{\sqrt{2}}, \frac{\pm 3+\pm i}{\sqrt{2}},\\ \mathcal{Q}_j\geq \sigma_z\sqrt{\frac{\zeta_j}{5}}{Im}\{d_j\}, d_j=\frac{\pm 1+3i}{\sqrt{2}}, \frac{\pm 3+3i}{\sqrt{2}}\\\mathcal{Q}_j\leq{\sigma_z\sqrt{\frac{\zeta_j}{5}}}{Im}\{d_j\}, d_j=\frac{\pm 1-3i}{\sqrt{2}},\frac{\pm 3-3i}{\sqrt{2}}\end{cases}
\end{eqnarray}


\begin{figure}[hh]
	\begin{center}
 \includegraphics[width=1\linewidth]{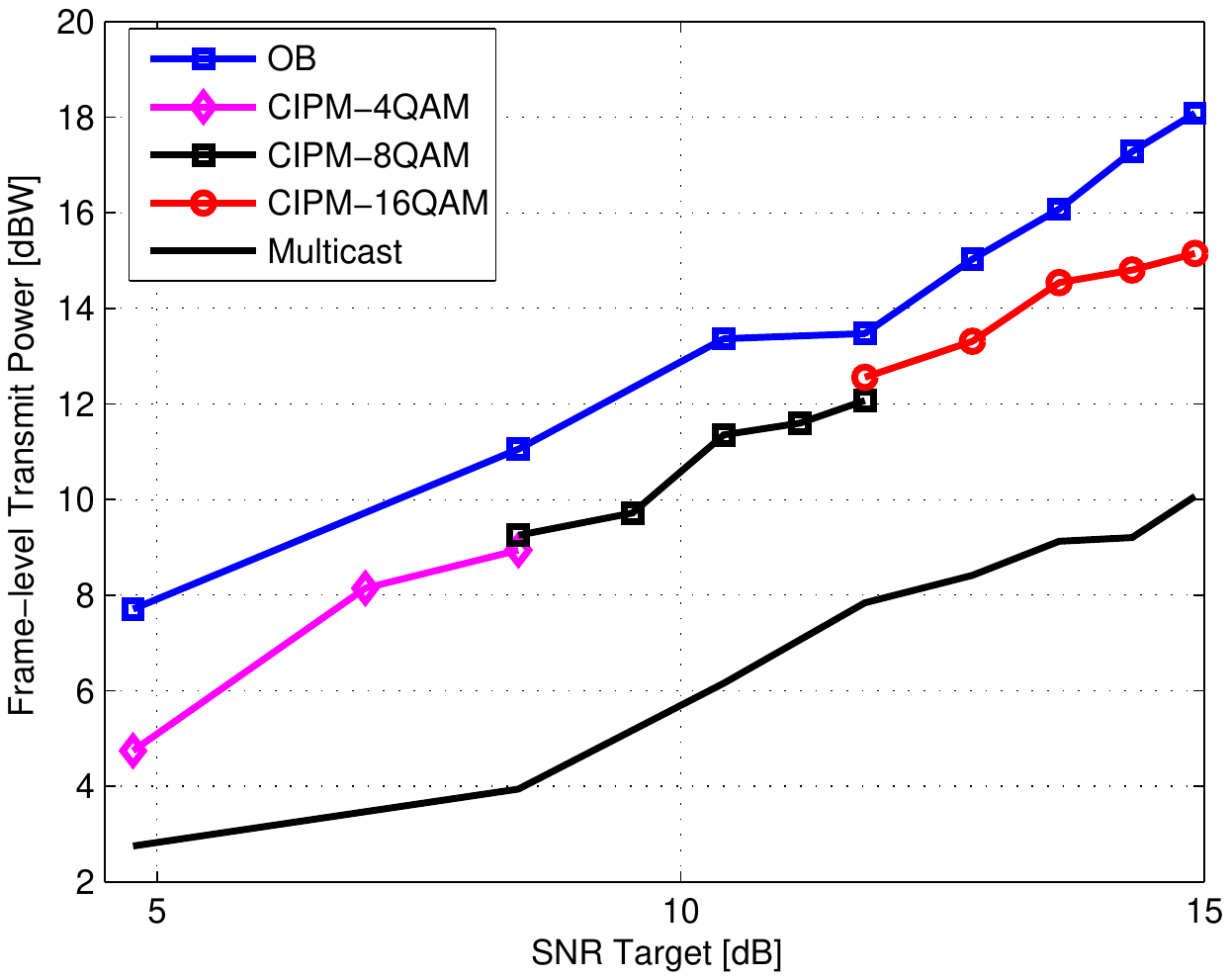}
 \caption{\label{Fig: Transmit power vs target SINR}Frame-level Transmit Power in dBW  vs target SINR in dB $\sigma^2_h=10$ dB, $\sigma^2_z=0$ dB.}
 \end{center}
 \end{figure}

The presented results in Fig. (\ref{Fig: Transmit power vs target SINR})-(\ref{Fig: min pow vs number of antennas}) have been acquired by averaging over 50 frames of $N=100$ symbols each. A quasi-static block fading channel was assumed where each block corresponds to a frame and the fading coefficients were generated as $\mathbf H\sim\mathcal{CN}$(0,$\sigma^2_h \mathbf{I}$).

Fig. \ref{Fig: Transmit power vs target SINR} compares the performance between  optimal user-level beamforming, symbol-level precoding, and PHY-layer multicasting from an average transmit power perspective. In all cases, the power minimization under SINR constraints is considered. The PHY-multicasting presents a theoretical lower-bound for CIPM since it does not have the phase constraints required to grant the constructive reception of the multiuser interference, while it can be noted that CIPM outperforms the optimal user-level precoding at every SINR target. This can be explained by the way we tackle the interference. In OB, the interference is mitigated to grant the SINR target constraints. In CIPM, the interference is exploited at each symbol to reduce the required power to achieve the SINR targets. Furthermore, it can be noted that the throughput of CIPM can be scaled with the SINR target by employing adaptive multi-level modulation (4/8/16-QAM). 

\begin{figure}[hh]
	\begin{center}
	\includegraphics[width=0.8\linewidth]{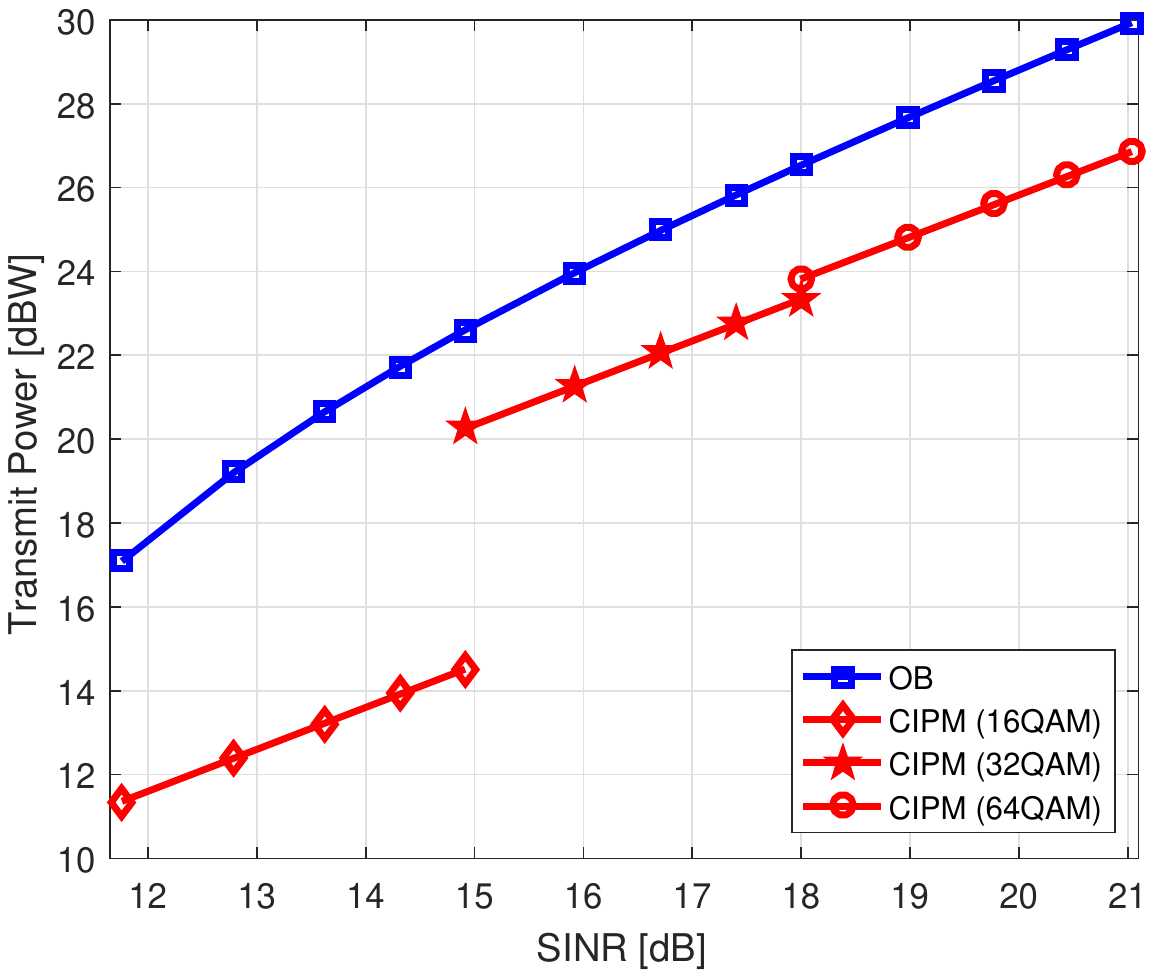}
	\vspace{-0.2cm}
	\caption{\label{Fig: Transmit power vs target SINR_Higher}Frame-level energy efficiency vs target SINR $\sigma^2_h=10$ dB, $\sigma^2_z=0$ dB, $N_t=K=12$}
	\end{center}
\end{figure}
\textcolor{black}{
Fig. \ref{Fig: Transmit power vs target SINR_Higher} compares also the performance between  optimal user-level beamforming and symbol-level precoding from an average transmit power perspective at higher order modulations (16/32/64 QAM). It can be argued that symbol-level precoding techniques are not feasible due to the number of the inner constellation point. Although the amount of achieved power savings are decreased with modulation order, especially between 16 and 32 QAM. The achieved power saving in comparison to OB is 2.5-3.2 dB, which is still considerable amount and it does not hinder the utilization of symbol-level modulation at higher order modulations. Unlike the gap of transmit power between 16 QAM and 32 QAM, the gap of transmit power at the transition point between 32 QAM and 64 QAM is small (0.1 dB). Therefore, it can be conjectured that gap is negligible moving to much higher order modulations. } 


\begin{figure}[hh]
	\begin{center}
 \includegraphics[width=0.8\linewidth]{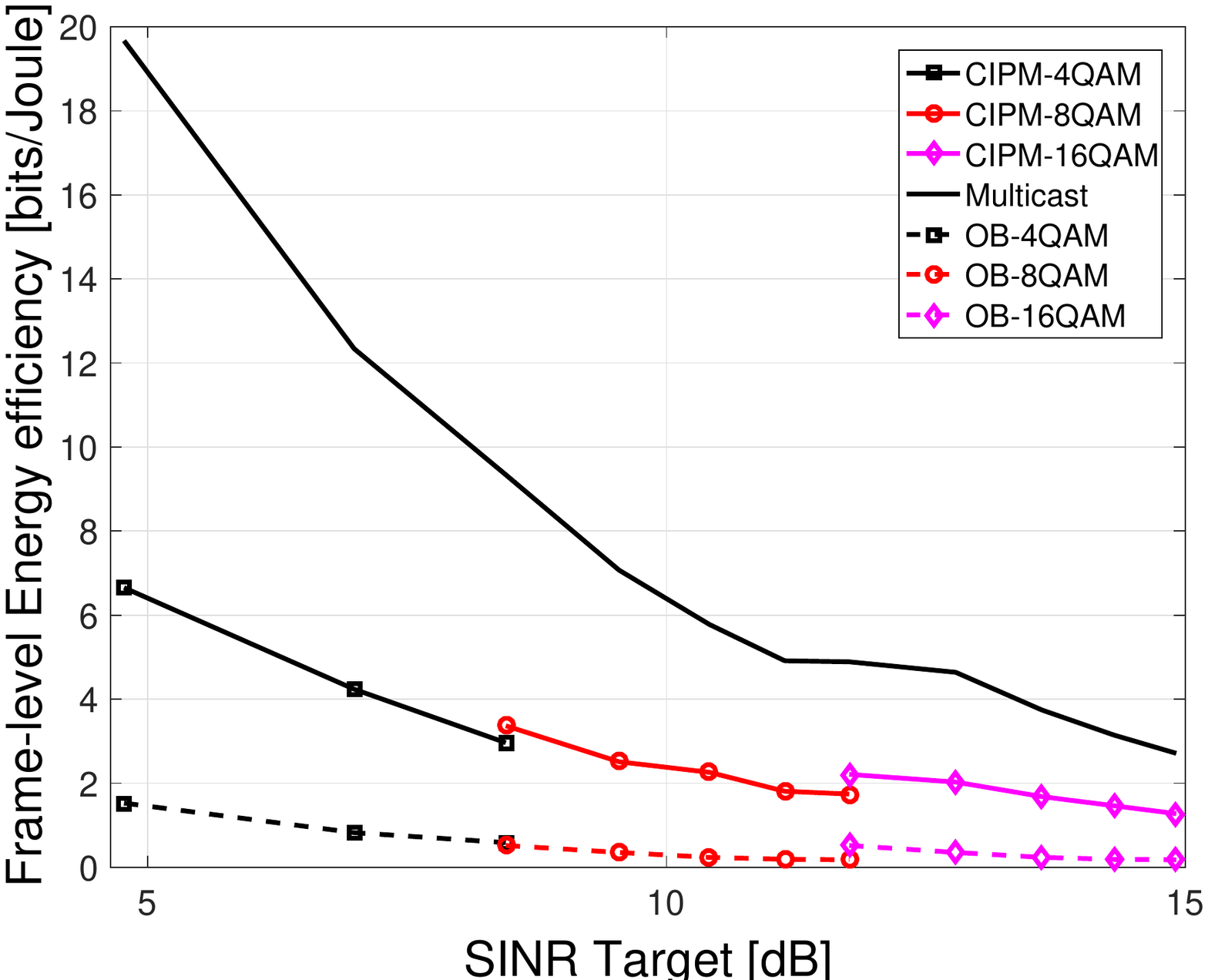}
 \vspace{-0.2cm}
 \caption{\label{Fig: Energy efficiency vs target SINR}Frame-level energy efficiency vs target SINR $\sigma^2_h=10$ dB, $\sigma^2_z=0$ dB}
 \end{center}
 \end{figure}

 Fig. (\ref{Fig: Energy efficiency vs target SINR}) compares the performance between  optimal user-level beamforming, symbol-level precoding, and PHY-layer multicasting from an energy efficiency perspective. It can be noted that CIPM outperforms OB at all target SINR values. This can be explained by the decreased required power to achieve the SINR target since the energy efficiency takes into the account both goodput and power consumption.  


\begin{figure}
	\begin{center}
\includegraphics[width=0.8\linewidth]{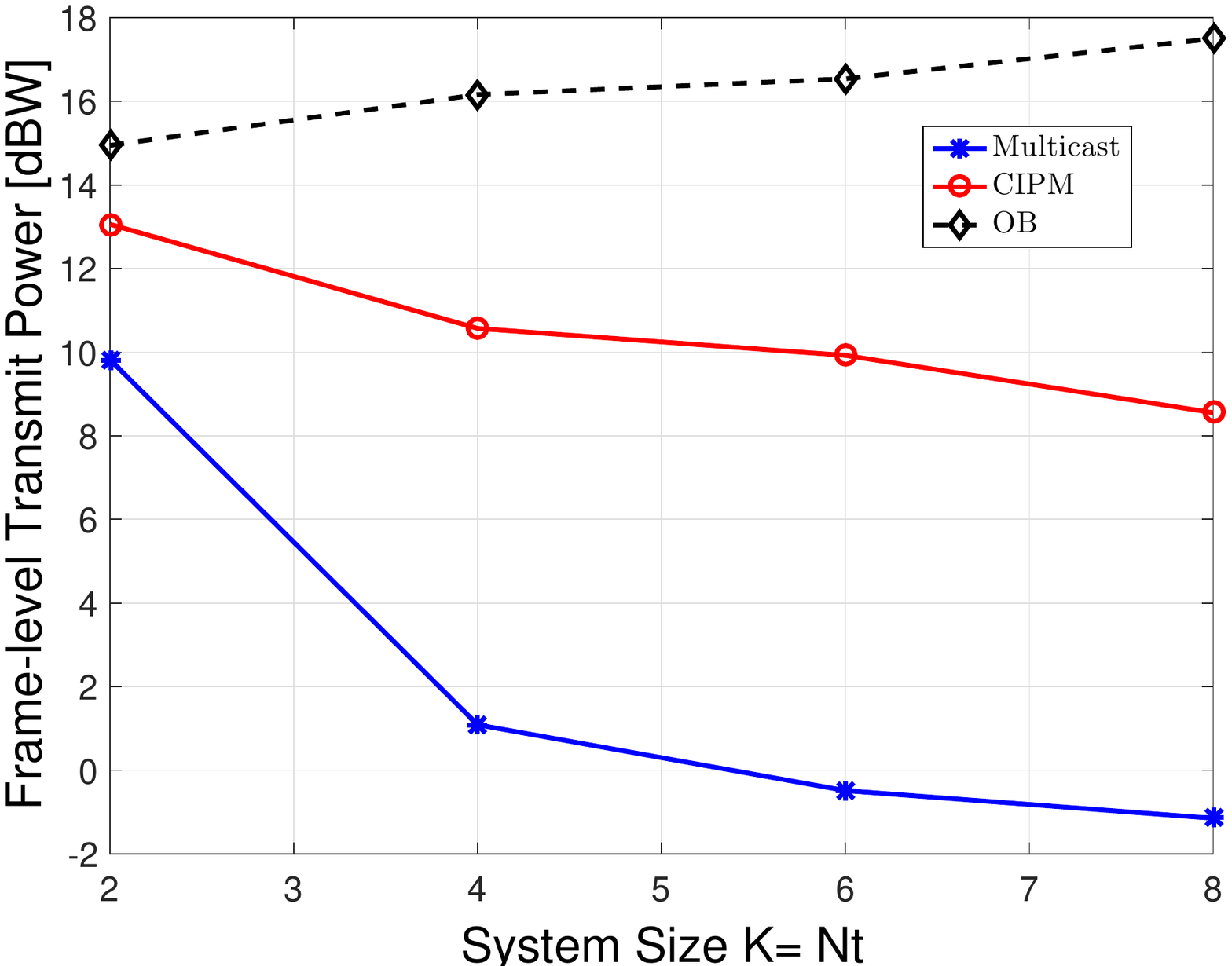}
\vspace{-0.2cm}
 \caption{\label{Fig: min pow vs number of antennas} Frame-level transmit power vs the number of the system size, $K=N_t$, 16QAM, $\sigma^2_h=10$ dB, $\sigma^2_z=0$ dB.}
 \end{center}
 \end{figure}
Fig. (\ref{Fig: min pow vs number of antennas}) compares OB and CIPM in terms of frame-level transmit power scaling versus system size. It should be reminded that the energy efficiency metric takes into the account the detection errors at the receiver. It can be noted that the average transmit power for CIPM decreases with the system size, while for OB it increases. This can be explained intuitively by the fact that the power leaving each transmit antenna constructively contributes to achieve the SINR targets for each user. \textcolor{black}{ The power saving improves with the system size due to two important facts:
	\begin{itemize}
		\item  The fact that the interference among data streams increases with number of stream $K$(i.e. the denominator of the SINR in the conventional linear beamforming  $\sum_{j,j\neq k}\|\mathbf{h}_k\mathbf{w}_jd_j\|^2$). However in constructive interference techniques (symbol-level), the interference signals are predesigned to add constructively to the target signal  $\| \mathbf{h}_k\mathbf{w}_k d_k+\sum^K_{j,j\neq k}\mathbf{h}_k\mathbf{w}_j d_j\|^2$ , the interference term is no longer in the denominator (is moved to the numerator of the SINR expression).
		\item The fact that the probability of exploiting interference at outer and outermost constellation point increases with system size.  The probability of having a data symbol belongs to inner constellation points $P_i$:
	\begin{eqnarray}\nonumber
	\hspace{-0.8cm}	P_{i}=\frac{\text{number of inner constellation points}}{\text{modulation order} M}= \begin{cases}
		1/4, \text{16QAM}\\
		1/2, \text{32QAM}\\
		9/16, \text{64QAM}.
		\end{cases}
		\end{eqnarray} \\
		The probability of exploiting interference at the outer constellation point $P_{CI}$ equals to the probability of no symbols at instant $n$ belongs to the inner constellation point for all users, which can be expressed as:
		\begin{eqnarray}
		P_{CI}=1-(P_i)^K.
		\end{eqnarray} 
		This means that the probability of exploiting interference becomes higher with system size, hence, more power saving can be achieved.
	\end{itemize} } 
	On the contrary, OB has to send a higher number of interfering symbol streams as the system size increases and this leads to poor energy efficiency.\\
	
\begin{figure}
	\begin{center}
	\includegraphics[width=1\linewidth]{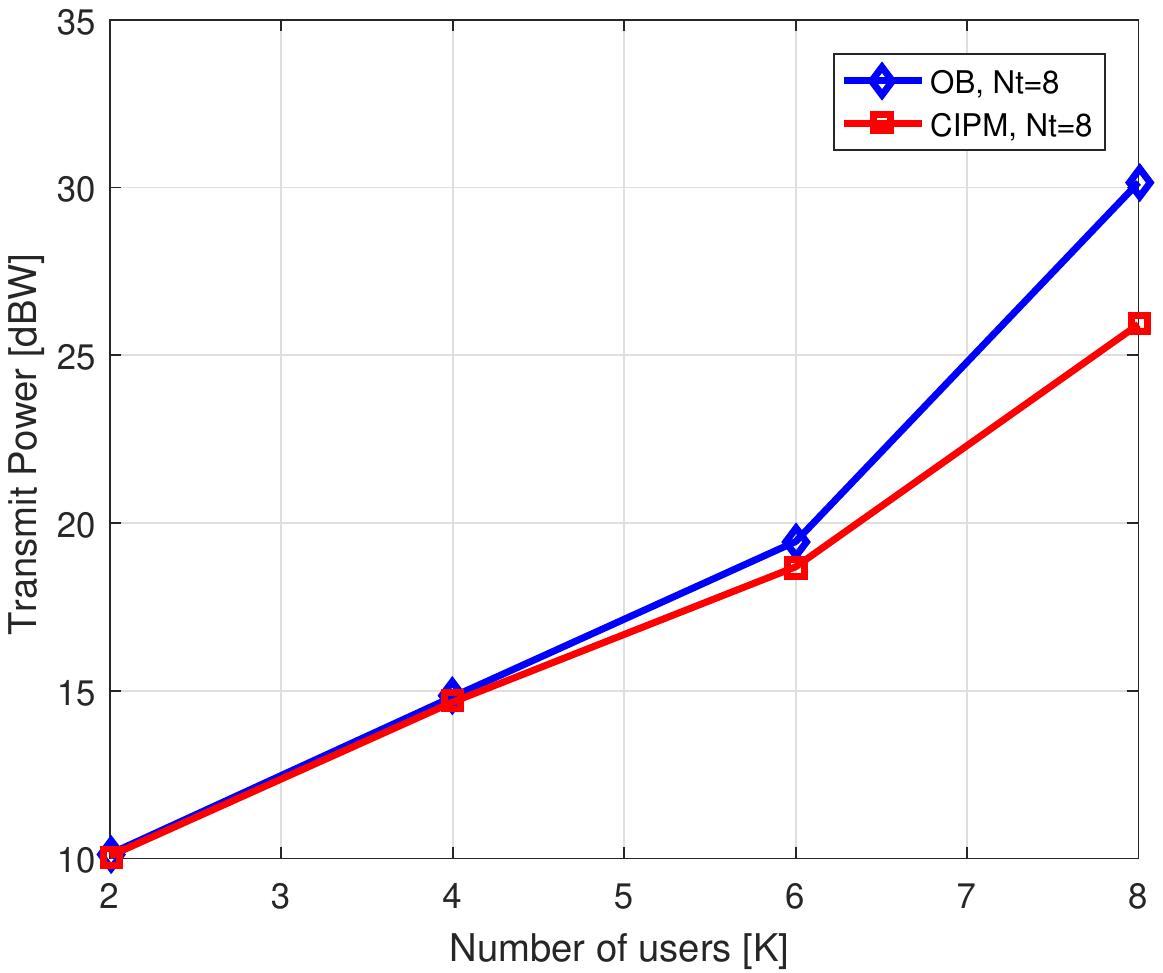}
	\vspace{-0.2cm}
	\caption{\label{Fig: min pow vs num of users} Frame-level transmit power vs the number of users, $N_t=4,8$, 16QAM, $\sigma^2_h=0$ dB, $\sigma^2_z=0$ dB.}
	\end{center}
\end{figure}
\textcolor{black}{Fig. (\ref{Fig: min pow vs num of users}) compares OB and CIPM  in terms of frame-level transmit power versus number of users. It can be noted that transmit power increases with number of users for both CIPM and OB. However,  for low number of users with respect to the number of antennas, the users are almost orthogonal, there is not considerable amount of interference to be exploited or to be mitigated that is why CIPM and OB perform closely to each other. CIPM starts outperforming OB with increasing the number of users, the gain reaches its maximum level at full loading scenario ($K=N_t$).     }

\begin{figure}[hh]
	\begin{center}
\includegraphics[width=0.8\linewidth]{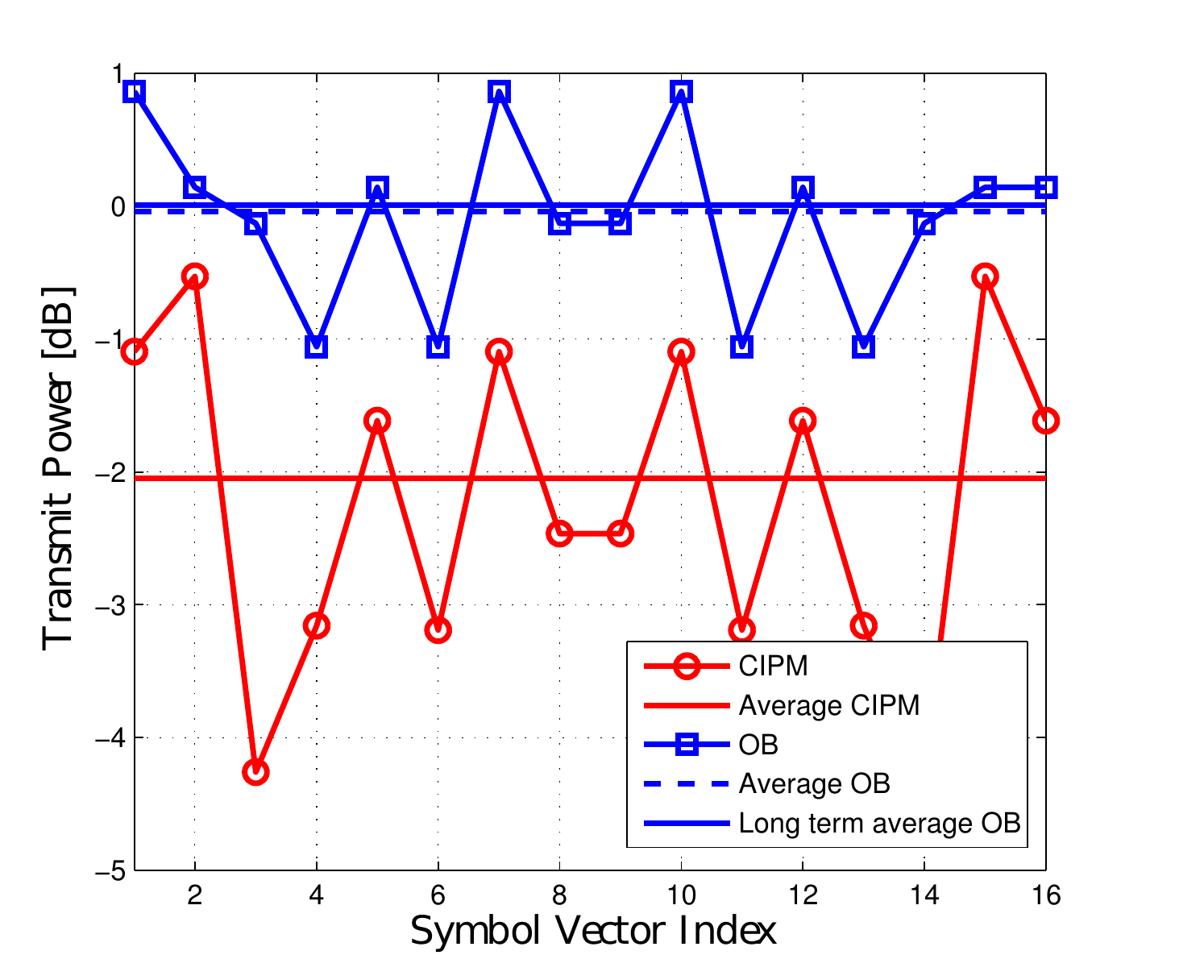}
\vspace{-0.2cm}
 \caption{\label{Fig: min pow vs data combinations} The power variance during the frame, QPSK modulation. $N_t=2$, $K=2$, $\sigma^2_h=10$ dB, $\sigma^2_z=0$ dB.}
 \end{center}
 \end{figure}
 
Fig. (\ref{Fig: min pow vs data combinations}) depicts the power variation during the frame for CIPM and OB. We study the transmit power at all possible symbol combinations, which is equal to 16 combinations for $2\times 2$ system size and QPSK for both users. It should be noted that channels between the BS and users' terminal are fixed during the frame, the users' channels have the following value: 
\begin{eqnarray}\nonumber
\mathbf{H}=\begin{bmatrix}\begin{array}{cc}
0.1787 + 1.9179i  &0.9201 + 1.0048i\\
 -2.1209 - 1.5455i &1.5138 + 0.2250i
\end{array}
\end{bmatrix}
\end{eqnarray}
The long term average OB equals to $\sum^2_{i=1}\|\mathbf{w}_i\|^2$, average OB equals to $\mathbb{E}_{d_j}\|\sum^2_{i=1}\mathbf{w}_id_i\|^2$  and OB $\|\sum^2_{i=1}\mathbf{w}_id_i\|^2$.  It can be noted that the average transmit power per frame for OB is $2.2$ dB higher than CIPM. The power changes within the frame. It can be noted that the maximum power difference between CIPM and OB equals to 4.1 dB at symbol combination no. 3 and no. 14 and the minimum power difference equals to 0.4 dB at symbol combination no. 2 and no. 15.


\begin{figure}[hh]
	\begin{center}
\includegraphics[width=0.8\linewidth]{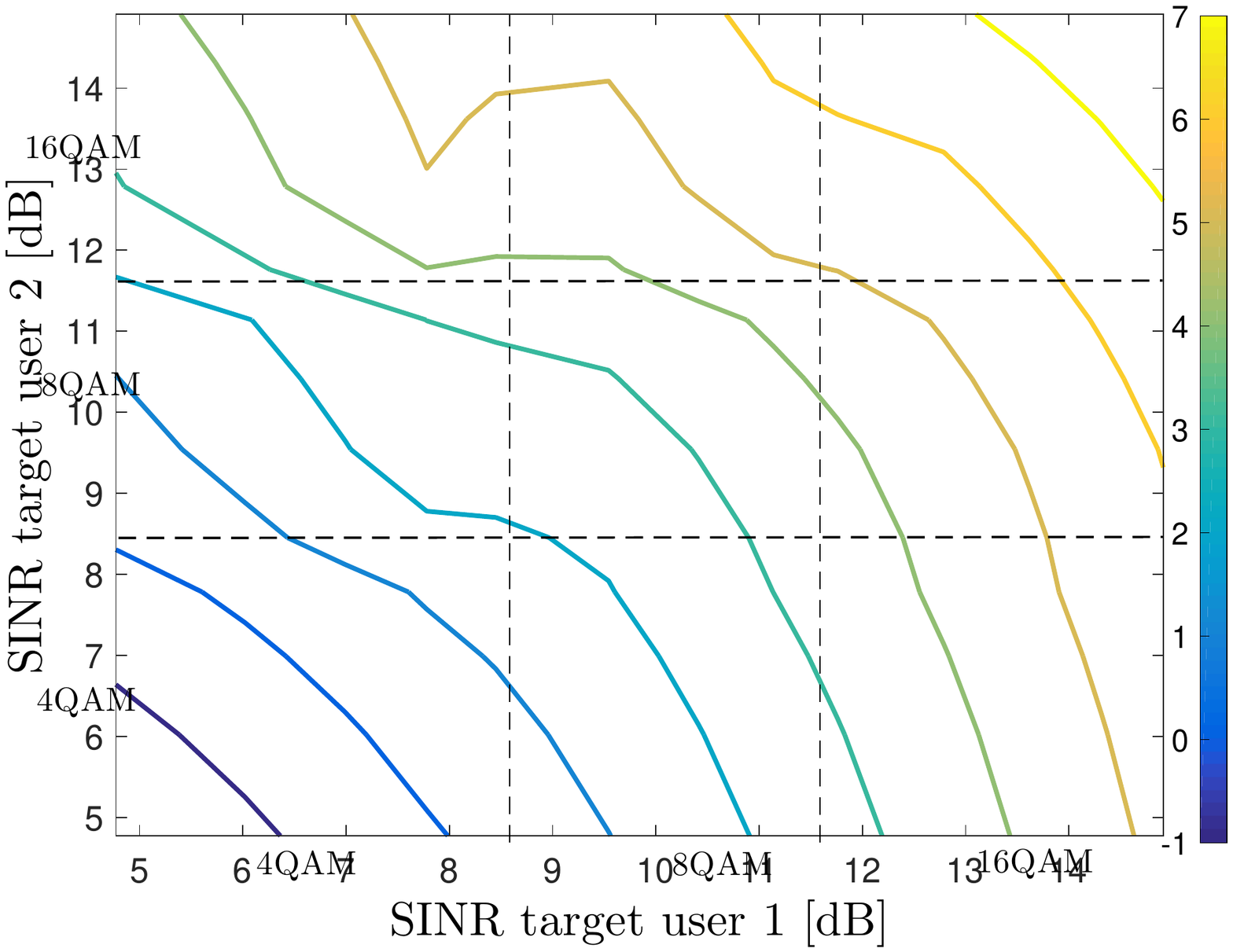}
 \caption{\label{Fig: contour power} Transmit power regions vs users' target SINR and their corresponding modulation. $N_t=2$, $K=2$, $\sigma^2_h=10$ dB, and $\sigma^2_z=0$ dB. }
 \end{center}
 \end{figure}

In Fig. (\ref{Fig: contour power})-(\ref{Fig: Contour energy efficiency}), we depict the transmit power and  the energy efficiency regions for the following $2\times 2$ channel:

\begin{eqnarray}\nonumber
\mathbf{H}=\begin{bmatrix}\begin{array}{cc}
1.3171 + 5.6483i & -1.8960 + 0.6877i\\
-0.6569 + 3.7018i& -2.5047 - 2.8110i
\end{array}
\end{bmatrix}
\end{eqnarray}

In Fig. (\ref{Fig: contour power}), we illustrate the transmit power with respect to SINR target constraints (and their mapping to the corresponding modulation). At each SINR constraint set, we find the average power for all possible symbol combinations. It should be noted that symbol-level precoding can satisfy different data rate requirements by assigning different modulations to different users. Moreover, it can be noted that the transmit power increases with increasing the modulation order since this demands higher target SINR.

\begin{figure}[hh]
	\begin{center}
\includegraphics[width=0.8\linewidth]{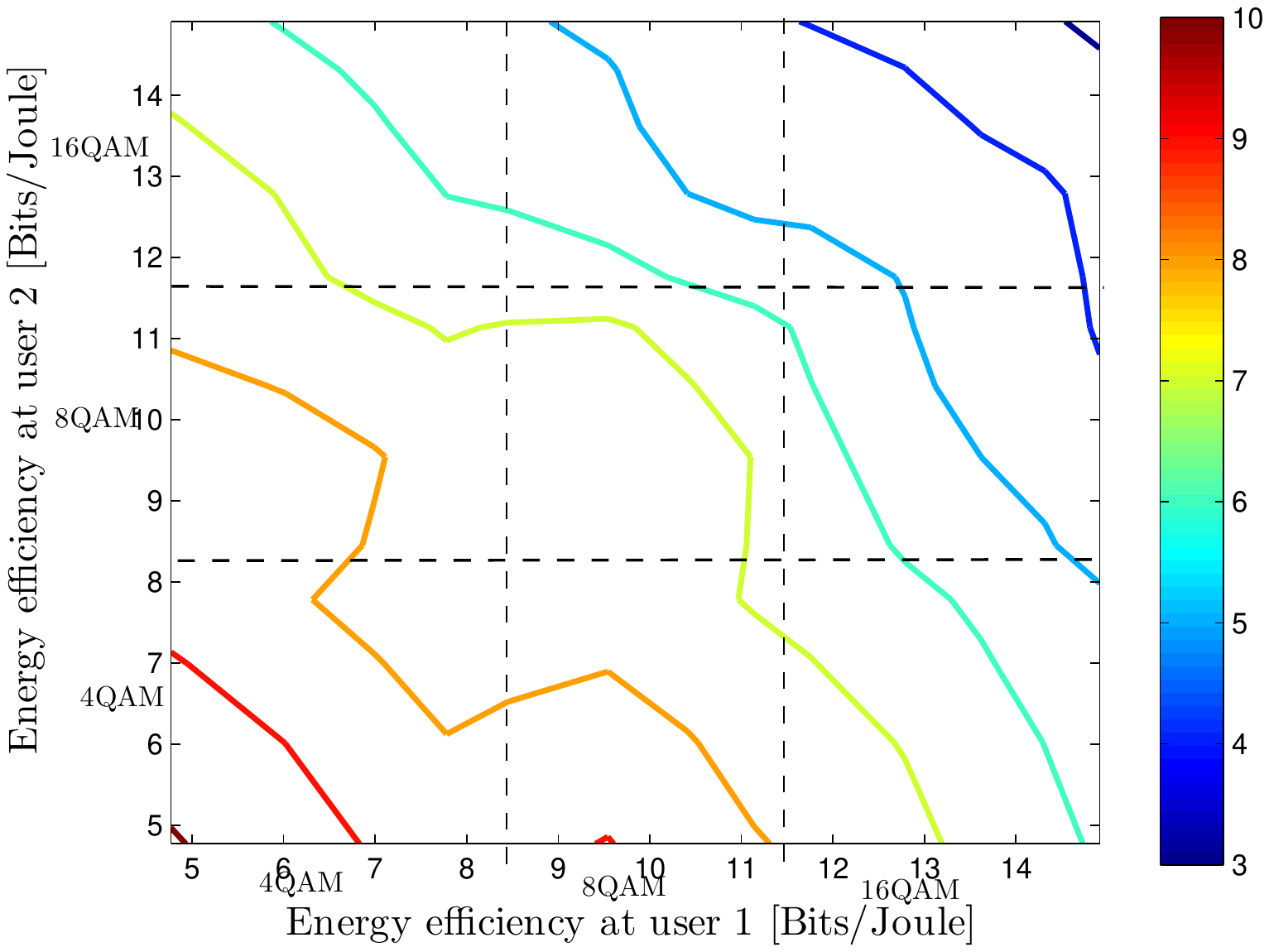}
 \caption{\label{Fig: Contour energy efficiency}  Energy efficiency  regions $\eta$  vs users' target SINR and their corresponding modulation. $N_t=2$, $K=2$, $\sigma^2_h=10$ dB, and $\sigma^2_z=0$ dB.}
 \end{center}
 \end{figure}
 
 In Fig. (\ref{Fig: Contour energy efficiency}), we plot the energy efficiency with respect to SINR target constraints (and their mapping to the corresponding modulation). At each SINR constraints set, we find the energy efficiency for all possible symbol combinations. For each symbol combination and SINR constraint, we vary the noise to capture the impact of SER on the energy efficiency performance. It can be noted that the energy efficiency decreases with the modulation order since this demands higher target SINR.
 
\begin{figure}[hh]
	\begin{center}
\includegraphics[width=0.8\linewidth]{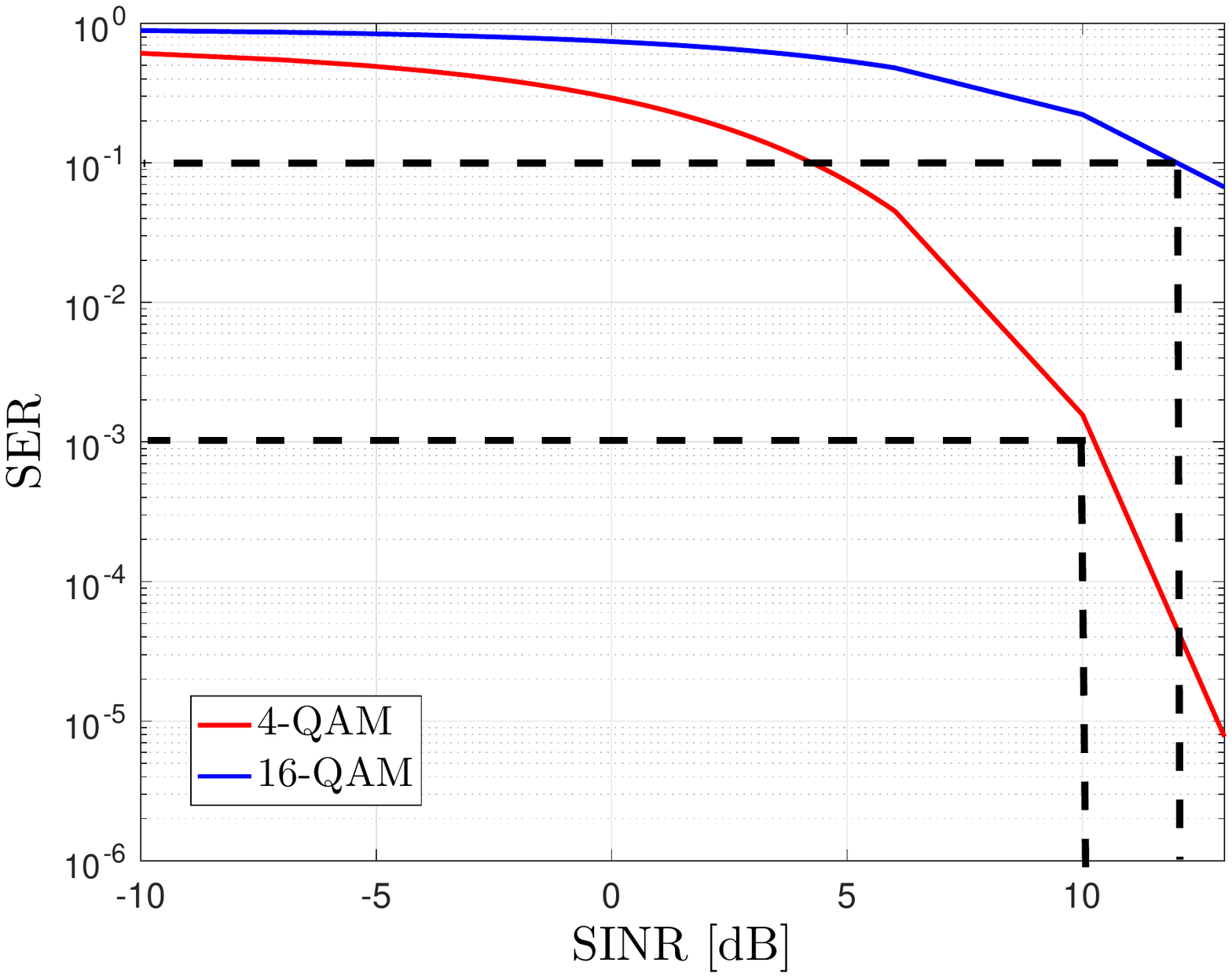}
\vspace{-0.3cm}
 \caption{\label{Fig: SER VS SINR} SER curves vs SNR for 4-QAM and 16-QAM. The dashed lines present the selected SNR targets to achieve the SER value of $10^{-3}$, $10^{-1}$ for 4-QAM and 16-QAM respectively.}
 \end{center}
 \end{figure}
 
SER is depicted in Fig. (\ref{Fig: SER VS SINR}) for 4QAM and 16QAM modulations.
If we assume that the target rates for user 1 and user 2 are $3.6~bps/Hz$, and $1.998~bps/Hz$ respectively, the modulation types that suit the rate requirements imposed by each user are 16 QAM and 4 QAM respectively. Based on \eqref{eq: SER from rate}, 
 the corresponding SER for both users are  $10^{-1}$, $10^{-3}$ respectively. Using the SER values, we can find the related SINR target constraints from the curves in Fig. (\ref{Fig: SER VS SINR}), which are almost 13 dB and 10 db respectively. 
 
 In Table \ref{Table: modulation}, we compare the performance of the equality constraints and inequality constraints in symbol-level precoding. It can be noted that the gains of having inequalities constraints  reduces with the modulation order, this is expected due to the fact that detection regions are more restricted in high modulation order and the outermost constellation points are limited.
 
 \begin{table}
 	\begin{center}
 		\hspace{-0.5cm}\begin{tabular}{|p{3.0cm}|p{1.2cm}|p{1.2cm}|p{1.2cm}|p{1.1cm}|}
 			\hline
 			Modulation/technique & QPSK & 8QAM & 16QAM\\
 			\hline
 			 Strict& 2.1dB&6.72dB &16.66dB\\
 			\hline
 		Relaxed&0.9 db &5.23 dB &16.27dB  \\
 			\hline
 		\end{tabular}
 		\vspace{0.2cm}
 		\caption{\label{Table: modulation} Comparison of strict approach (equality constraints at outer and outermost constellation points) and relaxed approach (inequality constraint) from transmit power perspective}
 	\end{center}
 \end{table}


\subsection{Complexity}
\label{complexity}
\begin{table}
\begin{center}
\begin{tabular}{|p{2.0cm}|p{1.2cm}|p{1.2cm}|p{1.2cm}|p{1.2cm}|}
\hline
Technique/$(M\times K)$ & (2$\times$ 2) & (3$\times$ 3) & (4$\times$ 4)&(5$\times$ 5)\\
\hline
OB& 0.2090& 0.2512 & 0.3421&  0.3674\\
\hline
CIPM&$0.312\times \alpha_2$& $0.360 \times \alpha_3 $& $0.407\times \alpha_4$& $0.370\times\alpha_5$\\
\hline
\end{tabular}
\vspace{0.2cm}
\caption{\label{Table: complexity comparison} Comparison of the different technique from simulation run time perspective. $\sigma^2_h=20dB$, $\zeta=4.712dB$, QPSK, $\alpha_2=2^{2\times 2}$, $\alpha_3=2^{3\times 2}$, $\alpha_4=2^{4\times 2}$, $\alpha_5=2^{5\times 2}$. }
\end{center}
\end{table}

The source of complexity in the symbol-level precoding is the number of possible precoding calculations within a frame. This depends on the number of users, the modulation order of each user and the frame length $N$. The number of the possible calculations $\mathcal{N}$ can be mathematically expressed:  
\begin{eqnarray}
\label{calculation}
\mathcal{N}=\min\{2^{\sum^K_{j=1} m_j},N\}.
\end{eqnarray}
For small systems (i.e. lower modulation order and small $K$), the precoding vector can be evaluated beforehand on a frame-level for all possible symbol vector combinations and employed when required in the form of a lookup table. For large system (i.e. high order modulation order, high number of users), the number of the possible  calculations in some cases is greater than the frame length, so it is not necessary to find the precoding for all the possible combinations.
\begin{figure}[h]
\hspace{-8cm}\includegraphics[width=1.5\linewidth]{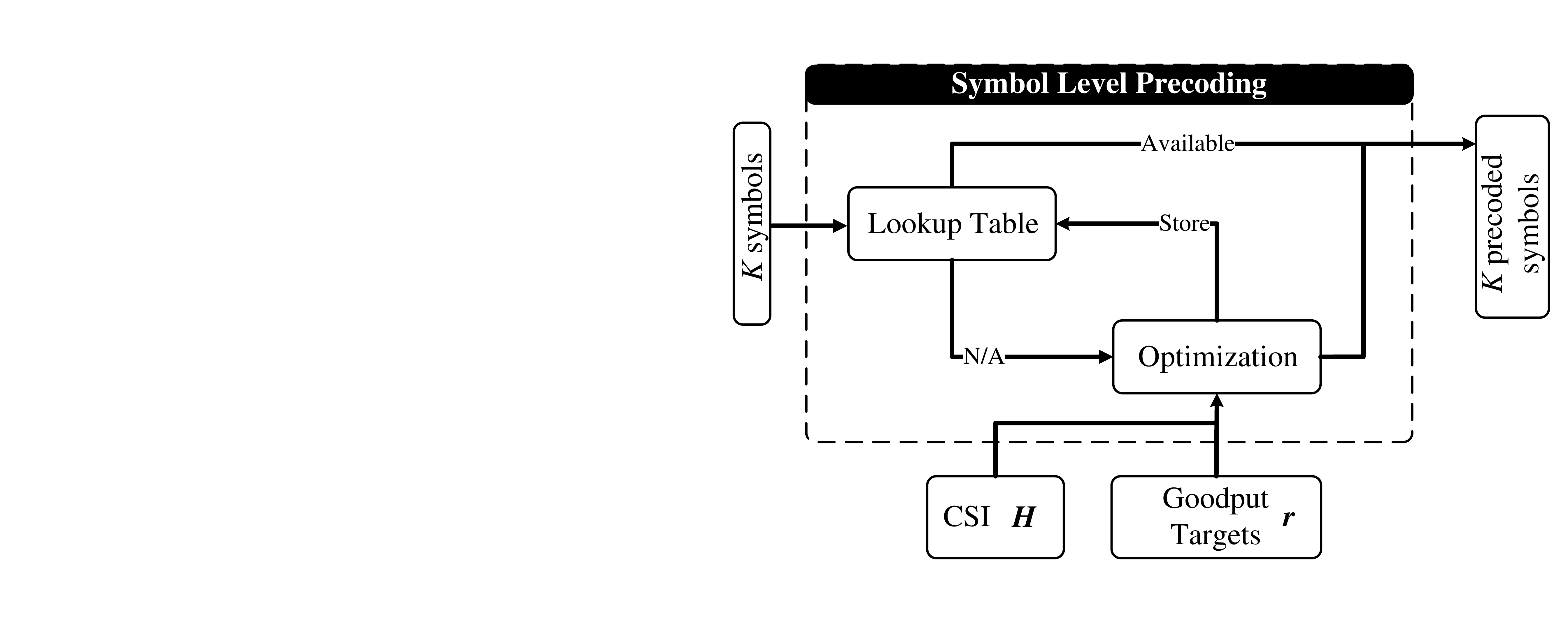}
	\vspace{-0.3cm}
	\caption{\label{Detailed} Detailed block diagram that explains the symbol-level precoding.}
\end{figure}
At each symbol combination, a convex optimization is solved. The complexity of such operation is evaluated using the simulation run-time metric as a metric. The complexity of the proposed algorithm is studied in Table (\ref{Table: complexity comparison}) in terms of simulation run-time. We compared the run-time of optimal beamforming (OB) and different symbol-level precoding. From the table, it can be deduced that the run time for OB is the lower than CIPM as expected. Moreover, the run-time for symbol-level precoding techniques depends on the combinations of the modulation order (possible data symbols) and the number of users, which is explained by the factor $\kappa_x$ in the table. However, for a single solution of the optimization problem, it can be seen that CIPM is less complex than OB as system size increases. Despite the high complexity of the proposed technique, it can be argued that with the emerging of cloud RAN, this computational complexity can be transferred to the cloud RAN level \cite{RAN}. Fig. \ref{Detailed} describes the detailed block diagram that shows how to implement the symbol-level processing. Each symbol combination is calculated once and stored in lookup table to avoid recalculating the same values.

\section{Conclusions}

Symbol-level precoding that jointly utilizes the CSI and data symbols to exploit the multiuser interference has been proposed for multi-level modulation. In these cases, the precoding design exploits the
overlap in users' subspace instead of mitigating it. In this work, we proposed precoding techniques that extend the concept of symbol-level precoding to adaptive multi-level constellation. This is a crucial step in order to enable the throughput scaling in symbol-level precoded systems. More specifically, we have generalized the relation between the symbol-level precoding and PHY-layer multicasting with phase constraints for any generic modulation. To assess the gains, we compared the symbol-level precoding to conventional user-level precoding techniques. For $2\times 2$ scenario, a $2.2$ dB transmit power reduction has been achieved. More importantly, this performance gain increases with the system size. Therefore, it can be conjectured that the symbol-level precoding retains some performance trends which resemble the PHY-layer multicasting.

\end{document}